\documentclass[12pt]{article}
\usepackage{amsmath,amssymb,amsthm,amscd}
\usepackage[dvipdfmx]{graphicx,color}
\usepackage{import}
\usepackage{mathrsfs}


\def\hybrid{\topmargin 0pt      \oddsidemargin 0pt
        \headheight 0pt \headsep 0pt
       \voffset-1cm
        \textwidth 6.25in       
       \textheight 9.5in       
        \marginparwidth 0.0in
        \parskip 5pt plus 1pt   \jot = 1.5ex}
\catcode`\@=11
\def\marginnote#1{}

\newcount\hour
\newcount\minute
\newtoks\amorpm
\hour=\time\divide\hour by60
\minute=\time{\multiply\hour by60 \global\advance\minute by-\hour}
\edef\standardtime{{\ifnum\hour<12 \global\amorpm={am}%
        \else\global\amorpm={pm}\advance\hour by-12 \fi
        \ifnum\hour=0 \hour=12 \fi
        \number\hour:\ifnum\minute<10 0\fi\number\minute\the\amorpm}}
\edef\militarytime{\number\hour:\ifnum\minute<10 0\fi\number\minute}

\def\draftlabel#1{{\@bsphack\if@filesw {\let\thepage\relax
   \xdef\@gtempa{\write\@auxout{\string
      \newlabel{#1}{{\@currentlabel}{\thepage}}}}}\@gtempa
   \if@nobreak \ifvmode\nobreak\fi\fi\fi\@esphack}
        \gdef\@eqnlabel{#1}}
\def\@eqnlabel{}
\def\@vacuum{}
\def\draftmarginnote#1{\marginpar{\raggedright\scriptsize\tt#1}}

\def\draftlabel#1{{\@bsphack\if@filesw {\let\thepage\relax
   \xdef\@gtempa{\write\@auxout{\string
      \newlabel{#1}{{\@currentlabel}{\thepage}}}}}\@gtempa
   \if@nobreak \ifvmode\nobreak\fi\fi\fi\@esphack}
        \gdef\@eqnlabel{#1}}
\def\@eqnlabel{}
\def\@vacuum{}
\def\draftmarginnote#1{\marginpar{\raggedright\scriptsize\tt#1}}

\def\draft{\oddsidemargin -.5truein
        \def\@oddfoot{\sl preliminary draft \hfil
        \rm\thepage\hfil\sl\today\quad\militarytime}
        \let\@evenfoot\@oddfoot \overfullrule 3pt
        \let\label=\draftlabel
        \let\marginnote=\draftmarginnote
   \def\@eqnnum{(\theequation)\rlap{\kern\marginparsep\tt\@eqnlabel}%
\global\let\@eqnlabel\@vacuum}  }


\def\numberbysection{\@addtoreset{equation}{section}
        \def\theequation{\thesection.\arabic{equation}}}

\def\underline#1{\relax\ifmmode\@@underline#1\else
        $\@@underline{\hbox{#1}}$\relax\fi}

\def\titlepage{\@restonecolfalse\if@twocolumn\@restonecoltrue\onecolumn
     \else \newpage \fi \thispagestyle{empty}\c@page\z@
        \def\thefootnote{\fnsymbol{footnote}} }

\def\endtitlepage{\if@restonecol\twocolumn \else  \fi
        \def\thefootnote{\arabic{footnote}}
        \setcounter{footnote}{0}}  
\relax


\numberbysection
\hybrid

\newfont{\Bbbb}{msbm7 scaled 1\@ptsize00}
\newcommand{\CC}{\mathbb C}

\newcommand{\ccc}{\raise-1pt\hbox{$\mbox{\Bbbb C}$}}
\newcommand{\DD}{\mathbb D}
\newcommand{\DDD}{\raise-1pt\hbox{$\mbox{\Bbbb D}$}}


\newcommand{\RR}{\mathbb R}

\newcommand{\UU}{\mathbb U}
\newcommand{\UUU}{\raise-1pt\hbox{$\mbox{\Bbbb U}$}}

\newcommand{\ZZ}{\mathbb Z}
\newcommand{\z}{\raise-1pt\hbox{$\mbox{\Bbbb Z}$}}

\newcommand{\SSS}{\mathbb S}
\newcommand{\sss}{\raise-1pt\hbox{$\mbox{\Bbbb S}$}}

\def\beq{\begin{equation}}
\def\eeq{\end{equation}}
\def\p{\partial}


\newcommand\bbD{\mathbb{U}}
\newcommand\Comp{\mathbb{C}}
\newcommand\der{\partial}

\newcommand\interior[1]{\overset\circ{#1}}

\renewcommand\Im{\mathrm{Im}\,}

\renewcommand\Re{\mathrm{Re}\,}
\newcommand\Real{\mathbb{R}}
\renewcommand\setminus{\smallsetminus}

\newcommand\secref[1]{Section~\ref{#1}}
\newcommand\appref[1]{Appendix~\ref{#1}}

\newcommand\lemref[1]{Lemma~\ref{#1}}

\newcommand\figref[1]{Figure~\ref{#1}}

\newtheorem{theorem}{Theorem}[section]
\newtheorem{lemma}{Lemma}[section]
\newtheorem{lemma-definition}{Lemma-Definition}[section]

\begin{document}

\begin{titlepage}

\title{Dispersionless version of the constrained Toda hierarchy
and symmetric radial L\"owner equation}

\author{T. Takebe\thanks{
National Research University Higher School of
Economics,
20 Myasnitskaya Ulitsa, Moscow 101000, Russian Federation,
e-mail: ttakebe@hse.ru}
\and
A.~Zabrodin\thanks{
Skolkovo Institute of Science and Technology, 143026, 
Moscow, Russian Federation and
Steklov Mathematical Institute of Russian Academy of Sciences,
Gubkina str. 8, Moscow, 119991, Russian Federation,
e-mail: zabrodin@itep.ru}}

\date{April 2022}
\maketitle


\begin{abstract}

We study the dispersionless version of the recently introduced constrained Toda
hierarchy. Like the Toda lattice itself, it admits three equivalent formulations:
the formulation in terms of Lax equations, the formulation of the 
Zakharov-Shabat type
and the formulation through the generating equation for the 
dispersionless limit of 
logarithm of the tau-function. We show that the dispersionless 
constrained Toda
hierarchy describes conformal maps of reflection-symmetric planar 
domains to the exterior
of the unit disc. We also find finite-dimensional reductions 
of the hierarchy and 
show that they are characterized by a differential equation 
of the L\"owner type which we call the symmetric radial L\"owner equation.
It is also shown that solutions to the 
symmetric radial L\"owner equation are conformal maps of the exterior of
the unit circle with two symmetric slits to the exterior of the unit 
circle. 

\end{abstract}

\end{titlepage}

\vspace{5mm}

%

\tableofcontents

\vspace{5mm}

\section{Introduction}

The constrained Toda hierarchy was recently introduced in \cite{KZ21}. It is 
a subhierarchy of the Toda lattice hierarchy \cite{UT84} defined by the 
constraint 
\beq\label{int1}
\bar L =L^{\dag}
\eeq
for the two pseudo-difference Lax operators $L$, $\bar L$ (in the symmetric gauge). 
The constraint is preserved by the flows $\p_{t_k}-\p_{t_{-k}}$ and is destroyed
by the flows $\p_{t_k}+\p_{t_{-k}}$ of the Toda lattice hierarchy, so one has to
fix $t_k +t_{-k}=0$. Here $\{t_k\}_{k\in \z}$ is the infinite set of independent variables
(times) indexed by integer numbers. The times $t_k$ with $k\neq 0$ are
in general complex numbers while $t_0=n$ is special: it takes integer values. 
The hierarchy is defined by the infinite set of Lax equations (evolution equations 
for the Lax operator in the times $t_k$). 

An equivalent formulation is via the tau-function $\tau$ which is a function of the 
infinite set of independent variables satisfying functional relations which are
bilinear in the case of the Toda lattice and have a more complicated form 
for the constrained Toda hierarchy. 
In order to perform the dispersionless limit \cite{TT91,TT95}, one should substitute
\beq\label{int2}
t_k \longrightarrow \frac{t_k}{\hbar}, \qquad \tau (\{t_i/\hbar \} )=
e^{F(\{t_i\}, \hbar )/\hbar^2}
\eeq
and take the $\hbar \to 0$ limit which allows one to obtain an equation for the function 
$F=\lim\limits_{\hbar \to 0}F(\{t_i\}, \hbar )$ from the 
equation for the tau-function.

Around 2000 it was observed that the dispersionless 
hierarchies are deeply related to
the theory of univalent functions and in particular to conformal maps of planar domains.
This observation was developed in two seemingly different but related 
directions. One of them treats
equations of the dispersionless Toda hierarchy as governing equations for conformal maps
of planar simply connected domains with smooth boundary as functions of their harmonic
moments which are identified with the hierarchical times $t_k$ \cite{MWWZ00,WZ00}. 
Another one is related to conformal maps of domains with curved slits. In the seminal 
papers \cite{GT1,GT2} it was shown that one-variable reductions of the 
dispersionless
Kadomtsev-Petviashvili (KP) hierarchy are classified by solutions of a 
L\"owner-type
differential equation which characterizes one-parameter families of conformal mappings
of domains with a growing slit onto a fixed reference domain 
\cite{L1923}--\cite{review}. Later
this important observation was extended to hierarchies of other types and other versions 
of the L\"owner equation \cite{Manas1}--\cite{ATZ21}.

The aim of this paper is to suggest the dispersionless version of the constrained
Toda hierarchy (Section 3) 
and clarify its relation to conformal maps (Section 4). We will show that 
equations of the dispersionless constrained Toda hierarchy govern conformal maps
of domains symmetric under reflection with respect to the 
real axis (reflection-symmetric domains).

Furthermore, 
we study finite-dimensional reductions of the 
dispersionless constrained Toda hierarchy (Sections 5,6).
The meaning of reduction is as follows. 
Infinite hierarchies of partial differential equations contain an infinite number of
independent variables (``times'') and an infinite number of dependent variables. 
The simplest possible reduction (one-variable reduction) 
is a reduction to just one dependent variable which
depends on all the times. All other dependent
variables become functions of it. 
We will show that the one-variable reductions of the 
dispersionless constrained Toda hierarchy are described
by solutions of a single ordinary differential equation of L\"owner type
which we call symmetric L\"owner equation. 
The equation contains an arbitrary function
(the ``driving function'')
which characterizes the reduction. 
Geometrically, the driving function 
characterizes the shape of the slit and 
the single dependent variable is a parameter along the slit. 
One can also consider multivariable 
($N$-variable) reductions
when there are $N$ dependent variables. In this case the reduction is described
by a system of $N$ symmetric L\"owner equations with $N$ driving functions.

Finally, Section 7 is devoted to clarifying the geometric meaning of the
symmetric radial L\"owner equation. We show that
solutions of this equation are conformal maps of domains 
with two curved slits symmetric under reflection with respect to
the real axis. In the appendix we give a rigorous proof of the 
symmetric radial L\"owner equation. 

\section{Constrained Toda hierarchy}

We begin with the Toda lattice hierarchy \cite{UT84} in the symmetric gauge
\cite{Takebe1,Takebe2}. The two Lax operators are
\beq\label{t1}
L=c(n)e^{\p _n}+\sum_{k\geq 0}U_k(n)e^{-k\p_n}, \quad
\bar L=c(n-1)e^{-\p _n}+\sum_{k\geq 0}\bar U_k(n)e^{k\p_n},
\eeq
where $e^{k\p_n}$ are shift operators acting on functions of $n$ as $e^{k\p_n}f(n)=f(n+k)$ and $c(n)$, $U_k(n)$, $\bar U_k(n)$ are some functions of $n$.
The Lax operators are pseudo-difference operators (infinite Laurent series in the 
shift operator). Given the Lax operators, one can introduce the difference operators
\beq\label{t2}
B_m =(L^m)_{>0} +\frac{1}{2} (L^m)_{0}, \quad
B_{-m}=(\bar L^m)_{<0}+\frac{1}{2} (\bar L^m)_{0}, \quad m=1,2,3, \ldots ,
\eeq
where for a subset $\SSS \subset \ZZ$, we denote
$\displaystyle{\Bigl (\sum_{k\in \z} U_k e^{k \p_n}\Bigr )_{\sss}=
\sum_{k\in \sss} U_k e^{k \p_n}}$. The Toda lattice hierarchy is given by the 
Lax equations
\beq\label{t3}
\p_{t_m}L=[B_m, L], \quad \p_{t_m}\bar L=[B_m, \bar L]
\eeq
which define the hierarchical flows parametrized by the times $t_m$ for any non-zero
integer $m$. These equations seem to be disconnected but actually they
are connected via the common function $c(n)$.
An equivalent formulation is through the zero curvature (Zakharov-Shabat)
equations
\beq\label{t4}
\p_{t_k}B_m-\p_{t_m}B_k +[B_m, B_k]=0.
\eeq
They encode differential-difference equations for the coefficient functions
$c(n)$, $U_k(n)$, $\bar U_k(n)$.

The constrained Toda hierarchy is obtained by imposing the constraint
\beq\label{t5}
\bar L =L^{\dag},
\eeq
where the ${}^\dag$-operation is defined as $(f(n)\circ e^{k\p_n})^{\dag}
=e^{-k\p_n}\circ f(n)$. This implies that $\bar U_k(n)=U_k(n+k)$. It is easy to see that
the constraint is preserved by the flows
$\p_{t_k}-\p_{t_{-k}}$ and is destroyed
by the flows $\p_{t_k}+\p_{t_{-k}}$, so one has to
fix $t_k +t_{-k}=0$. In this way all the coefficient functions can be regarded as functions of
$t_k$ with $k>0$ only. Introducing difference operators
\beq\label{t6}
A_m =B_m-B_{-m},
\eeq
we can write the Lax and Zakharov-Shabat equations of the constrained hierarchy
in the form
\beq\label{t7}
\p_{t_m}L=[A_m, L], \quad [\p_{t_m}-A_m, \, \p_{t_k}-A_k]=0, \quad m>0.
\eeq
The simplest nontrivial equation of the hierarchy (obtained at $m=1$, $k=2$)
is
\beq\label{equation}
\begin{array}{l}
(\p_{t_2}-\p_{t_1}^2)\varphi_{n+1}-(\p_{t_2}+\p_{t_1}^2)\varphi_{n}
\\ \\
\phantom{aaaaaaaaaaaaaaa}
=2e^{\varphi_n -\varphi_{n-1}}-2e^{\varphi_{n+2} -\varphi_{n+1}}+
\frac{1}{2}(\p_{t_1}\varphi_{n+1})^2-\frac{1}{2}(\p_{t_1}\varphi_{n})^2,
\end{array}
\eeq
where the function $\varphi_n$ is defined as $c(n)=e^{\frac{1}{2}
(\varphi_{n+1}-\varphi_n )}$.

The Zakharov-Shabat and Lax equations are compatibility conditions for the 
linear problems
\beq\label{t8}
\p_{t_m}\psi =A_m\psi , \quad L\psi =z\psi 
\eeq
for the wave function $\psi =\psi (n,{\bf t}, z )$ 
depending on the spectral parameter $z\in \CC$
(and on all the times ${\bf t}=\{t_1, t_2, t_3 , \ldots \}$). The set of linear
equations is equivalent to the bilinear relation \cite{KZ21}
\beq\label{t9}
\Bigl ( \oint_{C_{\infty}}-\oint_{C_0}\Bigr )\psi (n, {\bf t}, z)\psi (n', {\bf t}', z^{-1})
\frac{dz}{2\pi i z}=0
\eeq
valid for all $n,n', {\bf t}, {\bf t}'$, 
where $C_{\infty}$, $C_0$ are small contours around $\infty$ and $0$ respectively. 

It was proven in \cite{KZ21} that there exists a tau-function $\tau_n({\bf t})$ 
of the constrained Toda hierarchy, and the wave function can be consistently 
expressed through
the tau-function as follows:
\beq\label{t10}
\psi (n, {\bf t}, z)=z^ne^{\xi ({\bf t}, z)}G_n^{1/2}({\bf t}, z)
\frac{\tau_n({\bf t}-[z^{-1}])}{\tau_n({\bf t})}, \quad z\to \infty ,
\eeq
\beq\label{t11}
\psi (n, {\bf t}, z^{-1})=z^{-n}e^{-\xi ({\bf t}, z)}\tilde G_n^{1/2}({\bf t}, z)
\frac{\tau_{n+1}({\bf t})}{\tau_n ({\bf t})}\,
\frac{\tau_{n+1}({\bf t}+[z^{-1}])}{\tau_n({\bf t})}, \quad z\to \infty .
\eeq
Here
\beq\label{t12}
\xi ({\bf t}, z)=\sum_{k\geq 1}t_k z^k,
\eeq
\beq\label{t13}
\begin{array}{c}
{\bf t}\pm [z^{-1}]=\Bigl \{t_1\pm z^{-1}, t_2\pm \frac{1}{2}z^{-2}, 
t_3\pm \frac{1}{3}z^{-3}, \ldots \Bigr \},
\end{array}
\eeq
and
\beq\label{t14}
G_n({\bf t}, z)=1-z^{-2}\left (\frac{\tau_{n+1}({\bf t})
\tau_{n-1}({\bf t}-[z^{-1}])}{\tau_n ({\bf t})\tau_n({\bf t}-[z^{-1}])}\right )^2,
\eeq
\beq\label{t15}
\tilde G_n({\bf t}, z)=1-z^{-2}\left (\frac{\tau_{n}({\bf t})
\tau_{n+2}({\bf t}+[z^{-1}])}{\tau_{n+1} ({\bf t})\tau_{n+1}({\bf t}+[z^{-1}])}\right )^2.
\eeq

Substituting these expressions into (\ref{t9}), one obtains an 
integral equation\footnote{Unlike the Toda lattice hierarchy, this equation 
is not bilinear but has a more
complicated form with square roots of (\ref{t14}), (\ref{t15}).}
for the tau-function which is the generating equation of the 
hierarchy. In particular, setting ${\bf t}-{\bf t}'=[a^{-1}]+[b^{-1}]$, $n-n'=2$, 
the residue calculus yields:
\beq\label{t16}
\begin{array}{l}
\displaystyle{
\frac{a^2b}{a-b}
G_n^{1/2}({\bf t}, a)\tilde G_{n-2}^{1/2}({\bf t}\! -\! 
[a^{-1}]\! -\! [b^{-1}], a)
\frac{\tau_n({\bf t}\! -\! [a^{-1}])}{\tau_n({\bf t})}
\frac{\tau_{n-1}({\bf t}\! -\! [a^{-1}]\! -\! [b^{-1}])
\tau_{n-1}({\bf t}\! -\! [b^{-1}])}{\tau^2_{n-2}( {\bf t}\! -\! [a^{-1}]\! -\! [b^{-1}])}
}
\\ \\
\displaystyle{
-\frac{ab^2}{a\! -\! b}
G_n^{1/2}({\bf t}, b)\tilde G_{n-2}^{1/2}({\bf t}\! -\! [a^{-1}]\! -\! [b^{-1}], b)
\frac{\tau_n({\bf t}\! -\! [b^{-1}])}{\tau_n({\bf t})}
\frac{\tau_{n-1}({\bf t}\! -\! [a^{-1}]\! -\! [b^{-1}])
\tau_{n-1}({\bf t}\! -\! [a^{-1}])}{\tau^2_{n-2}( {\bf t}\! -\! [a^{-1}]\! -\! [b^{-1}])}
}
\\ \\
\phantom{aaaaaaaaaaa}\displaystyle{=
ab \frac{\tau^2_{n-1}({\bf t}\! -\! [a^{-1}]\! -\! [b^{-1}])}{\tau^2_{n-2},
{\bf t}\! -\! [a^{-1}]\! -\! [b^{-1}])}-(ab)^{-1}
\frac{\tau^2_{n+1}({\bf t})}{\tau_n^2({\bf t})}.}
\end{array}
\eeq

Finally, let us point out the relation between the tau-function of the constrained 
Toda hierarchy and the tau-function $\tau_n^{\rm Toda}
(\ldots t_{-2}, t_{-1}; t_1, t_2, \ldots )$ of the Toda lattice hierarchy satisfying the 
constraints
\beq\label{t17}
(\p_{t_k}+\p_{t_{-k}})\log \tau_n^{\rm Toda}\Bigr |_{t_j+t_{-j}=0}=0 \quad 
\mbox{for all $k,j$ and $t_0$, $t_j-t_{-j}$}
\eeq
which are equivalent to (\ref{t5}). We have (see \cite{KZ21}):
\beq\label{t18}
\tau_n({\bf t})=\sqrt{\vphantom{A^{\oint}}\tau_n^{\rm Toda}
(\ldots -t_2, -t_1; t_1, t_2, \ldots )}.
\eeq

\section{The dispersionless limit}

In the dispersionless limit \cite{TT91,TT95}, one should substitute
\beq\label{d1}
t_k \longrightarrow \frac{t_k}{\hbar}, \qquad \tau_{t_0/\hbar}({\bf t}/\hbar )=
e^{F(t_0, {\bf t}, \hbar )/\hbar^2}
\eeq
and take the $\hbar \to 0$ limit which allows one to obtain an equation for the function 
$F=\lim\limits_{\hbar \to 0}F(t_0, {\bf t}, \hbar )$ from the 
equation for the tau-function. In accordance with this prescription, we can write
$$
\tau_{n\pm 1}({\bf t})\longrightarrow \exp \Bigl (\hbar^{-2}e^{\pm \hbar \p_{t_0}}F\Bigr ),
$$
$$
\tau_n ({\bf t}\pm [a^{-1}])\longrightarrow \exp \Bigl (
\hbar^{-2}F(t_0, t\pm \hbar [a^{-1}], \hbar)\Bigr )=
\exp \Bigl (\hbar^{-2}e^{\pm \hbar D(a)}F\Bigr ),
$$
where $D(z)$ is the differential operator
\beq\label{d1a}
D(z)=\sum_{k\geq 1}\frac{z^{-k}}{k}\, \p_{t_k}.
\eeq

The dispersionless limit of equation (\ref{t16}) is then straightforward. 
We only note that
\beq\label{d2}
\lim_{\hbar \to 0}G_{t_0/\hbar}({\bf t}/\hbar , z)=\lim_{\hbar \to 0}
\tilde G_{t_0/\hbar}({\bf t}/\hbar , z)=
1-z^{-2}e^{2\p_{t_0}(\p_{t_0}+D(z))F},
\eeq
so the square roots in (\ref{t17}) disappear in the $\hbar \to 0$ limit. The final
result is
\beq\label{d3}
\frac{ae^{-\p_{t_0}(\p_{t_0}+D(a))F}\! -\! be^{-\p_{t_0}(\p_{t_0}+D(b))F}}{a-b}=
\frac{abe^{-\p_{t_0}(2\p_{t_0}+D(a)+D(b))F}\! -\! 1}{ab}\,
e^{(\p_{t_0}+D(a))(\p_{t_0}+D(b))F}.
\eeq
This is the dispersionless analogue of the bilinear 
equation for the tau-function. In the zero dispersion limit it is not bilinear
since it is written for logarithm of the tau-function. 
Differential equations of the hierarchy are obtained from (\ref{d3}) by
expanding both sides in inverse powers of $a$, $b$ and equating the
coefficients. For example, the simplest nontrivial equation is
\beq\label{d3a}
F_{01}^2-F_{02}+2F_{11}=2e^{F_{00}},
\eeq
where $F_{jk}:=\p_{t_j}\p_{t_k}F$.

Introducing the function 
\beq\label{d4}
w(z)=ze^{-\p_{t_0}(\p_{t_0}+D(z))F},
\eeq
we can write equation (\ref{d3}) in the form
\beq\label{d5}
\frac{w(a)-w(b)}{w(a)w(b)-1}=(b^{-1}-a^{-1}) e^{(\p_{t_0}+D(a))(\p_{t_0}+D(b))F}.
\eeq
Taking logarithm of both sides and applying the $t_0$-derivative, we can represent
it in the form of a closed equation for the function $w(z)$:
\beq\label{d6}
(\p_{t_0}+D(a))\log w(b)=-\p_{t_0}\log \frac{w(a)-w(b)}{w(a)w(b)-1}.
\eeq

It is instructive to rewrite this equation in terms of the inverse function to
$w(z)$ which we denote as $z(w)$. Noting the relation
$$
\p_{t_k}w(z)=-\frac{\p_{t_k}z(w)}{\p_w z(w)}
$$
between the partial derivatives, and, as a consequence,
$$
w\p_w z(w)\, D(a)\log w(z)=-D(a) z(w)
$$
(the derivative in the right hand side is taken at constant $w$), we can rewrite
equation (\ref{d6}) in the form
\beq\label{d7}
D(a)z(w)=\left \{ z(w), \, \log \frac{w(a)-w}{w(a)-w^{-1}}\right \},
\eeq
where
\beq\label{d8}
\{f,g\}=w\frac{\p f}{\p w}\frac{\p g}{\p t_0}-w\frac{\p f}{\p t_0}\frac{\p g}{\p w}
\eeq
is the Poisson bracket. Equation (\ref{d7}) is the generating Lax equation for the
dispersionless constrained Toda hierarchy. To see this, we introduce the Faber
polynomials $B_k(w)$ according to the expansion
\beq\label{d9}
-\log \Bigl (w(z)-w\Bigr )+\frac{1}{2}\log \frac{zw(z)}{r}=\sum_{k\geq 1}\frac{z^{-k}}{k}
\, B_k(w), \quad r=(\lim_{z\to \infty}(w(z)/z))^{-1}.
\eeq
Here $w$ is an independent variable ($w\neq w(z)$).
Then equation (\ref{d7}) implies the Lax equations of the dispersionless constrained 
Toda hierarchy:
\beq\label{d10}
\p_{t_k}z(w)=\Bigl \{ B_k(w)-B_k(w^{-1}), \, z(w)\Bigr \}.
\eeq
It is easy to see that
\beq\label{d11}
\begin{array}{c}
B_k(w)=\Bigl (z^k(w)\Bigr )_{>0}+\frac{1}{2} \Bigl (z^k(w)\Bigr )_{0}.
\end{array}
\eeq
Comparing with (\ref{t1}), (\ref{t6}), we see that the function (or rather the
Laurent series) $z(w)$ is the dispersionless limit of the Lax operator:
$e^{\p_n}\to w$, $L \to z(w)$ and the formulas (\ref{d10}), (\ref{d11}) are
dispersionless limits of (\ref{t7}), (\ref{t2}). 

From now on, keeping in mind applications to conformal maps of planar domains,
we we will consider the real form of the Toda lattice hierarchy such that 
the time variables satisfy the conditions
\beq\label{d20}
t_{-k}=-\bar t_k, \quad k\geq 1, \quad \mbox{and $t_0$ is real}.
\eeq
Moreover, the tau-function of the real form of the Toda lattice hierarchy is real. 
For the constrained Toda hierarchy this means that $t_k=\bar t_k$, i.e., all 
the time variables are real. 

The dispersionless limit of the Lax operator $L$ is the Laurent series 
$z(w)$ while the limit of the 
second Lax operator, $\bar L$, is $\bar z(w^{-1})=\overline{z(\bar w^{-1})}$.
Since $e^{\p_n}$ in the dispersionless limit becomes $w$ and 
$(e^{\p_n})^{\dag}=e^{-\p_n}$, 
we have $w^{\dag}=w^{-1}$, so
the constraint (\ref{t5}) in the dispersionless limit
implies that $z(w)=\overline{z(\bar w)}$, or $w(z)=\overline{w(\bar z)}$, i.e.
all coefficients of the Laurent series $z(w)$ and $w(z)$ are real. 
The constraints (\ref{t17}) for the tau-function then mean that
$v_k=\p_{t_k}F$ are also real. 

\section{Dispersionless \ constrained Toda \
hierarchy and con\-for\-mal maps}

It is known that the real reduction of the dispersionless Toda
lattice hierarchy governs conformal maps of planar domains \cite{WZ00,MWZ02,Z01}.
Let us recall the main facts related to this connection. 

Given a simply-connected compact 
planar domain ${\sf D}$ with a smooth boundary, one can 
define the harmonic moments
\beq\label{h1}
v_k = \frac{1}{\pi}\int_{{\sf D}}z^k d^2z, \quad k\geq 1, \quad d^2z \equiv dxdy
\eeq
together with the complimentary moments of the exterior domain $\CC \setminus {\sf D}$
\beq\label{h2}
t_k=-\frac{1}{\pi k}\int_{\ccc \setminus {\sf D}}z^{-k}d^2z.
\eeq
Without loss of generality we assume that the domain ${\sf D}$ contains the origin.
The moments $t_k$ and $v_k$ are in general complex numbers. 
We also denote
\beq\label{h3}
t_0=\frac{1}{\pi}\int_{\sf D}d^2 z = \frac{\mbox{Area of ${\sf D}$}}{\pi}
\eeq
which is a real number. 

Let $z(w)$ be the conformal map from the exterior of the unit circle to the 
domain $\CC \setminus {\sf D}$ normalized in such a way that $z(\infty )=\infty$ and
$z'(\infty )=r>0$ ($r$ is a real positive number called conformal radius). 
As it was shown in \cite{WZ00}, the function $z(w)$ satisfies the equations
\beq\label{h4}
\p_{t_k}z(w)=\{ B_k(w), z(w)\}, \quad \p_{\bar t_k}z(w)=-\{\bar B_k (w^{-1}), z(w)\}
\eeq
with $B_k(w)$ given by (\ref{d11}). This allows one to identify $z(w)$ with the Lax
function of the dispersionless Toda hierarchy (the dispersionless limit of the Lax
operator). 

The dispersionless limit of logarithm of tau-function (the $F$-function) is given by
\beq\label{h5}
F^{\rm Toda}
=-\frac{1}{\pi^2}\int_{\sf D}\int_{\sf D}d^2z d^2\zeta \log \Bigl |z^{-1}-\zeta^{-1}\Bigr |
\eeq
(see \cite{MWZ02}).
It is a real-valued function of $t_0$, $t_k$ and $\bar t_k$. The first derivatives 
of $F$ with respect to $t_k$'s yield the complimentary moments $v_k$:
\beq\label{h6}
v_k=\p_{t_k}F^{\rm Toda}.
\eeq
The second mixed derivatives of $F$ provide full information about the conformal map
$w(z)$ from $\CC \setminus {\sf D}$ to the exterior of the unit circle (the inverse
function of $z(w)$) and the Green function $G(z,\zeta )$ 
of the Dirichlet boundary problem in 
$\CC \setminus {\sf D}$. Namely,
\beq\label{h7}
\begin{array}{c}
w(z)=z\exp \Bigl (-\frac{1}{2}\p_{t_0}^2 F^{\rm Toda} -\p_{t_0}D(z)F^{\rm Toda}\Bigr ),
\end{array}
\eeq
\beq\label{h8}
G(z, \zeta )=\log \Bigl |z^{-1}-\zeta^{-1}\Bigr |+\frac{1}{2}
\nabla (z)\nabla (\zeta )F^{\rm Toda},
\eeq
where
$$
\nabla (z)=\p_{t_0}+D(z) +\overline{D(z)}
$$
and $D(z)$ is the differential operator (\ref{d1a}). The well-known formula
\beq\label{h9}
G(z, \zeta )=\log \left |\frac{w(z)-w(\zeta )}{w(z)\overline{w(\zeta )}-1}\right |
\eeq
which expresses the Green function through the conformal map is equivalent to
the dispersionless limit of the Toda equations for the tau-function. 

In the case of the constrained Toda hierarchy, the coefficients of the Laurent series
$z(w)$ and the moments $t_k$ are real. It is obvious
from (\ref{h2}) that for domains 
symmetric under reflection with respect to the real axis
(reflection-symmetric domains) all moments are real. The converse
statement, i.e. that for real moments the corresponding domain is
symmetric, can be proved by means of the Schwarz function\footnote{The
Schwarz function $S(z)$ \cite{Davis} 
is a holomorphic function in a strip-like neighborhood
of the boundary of the domain such that $\bar z =S(z)$ on the boundary.
It is a generating function of the harmonic moments. Therefore, if all
harmonic moments are real, $\bar z=S(z)$ implies $z=S(\bar z)$, which
means that complex conjugation preserves the boundary.}.
This means that we deal with conformal maps
of reflection-symmetric domains. The moments $v_k$ are also real, and formula (\ref{h6})
means that the constraint (\ref{t17}) is satisfied  if we are 
in the class of reflection-symmetric domains. 
From (\ref{t18}) it follows that the
$F$-function of the dispersionless constrained Toda hierarchy is related to
$F^{\rm Toda}$ as
\beq\label{h10}
F^{\rm Toda}=2F.
\eeq
Therefore, one can see that formula (\ref{d4}) is a specialization of (\ref{h7})
for reflection-symmetric domains. In a similar way, equation (\ref{d5}) is a specialization
of (\ref{h8}) for reflection-symmetric 
domains with $z, \zeta$ lying on the real axis. 
In the class of reflection-symmetric domains, the function $F$, 
regarded as a function of
real moments $t_k$, satisfies equation (\ref{d3}). 

As it follows from the result of \cite{MWZ02}, the function $F$ satisfies
the following quasi-homogeneity equation:
\beq\label{h11}
2F=-\frac{1}{4}\, t_0^2 +t_0\p_{t_0}F +\sum_{k\geq 1}(2\! -\! k)t_k \p_{t_k}F.
\eeq

\noindent
{\bf Example.} For the ellipse with half-axes $a,b$ centered at $x_0\in \RR$ we have:
$$
t_0=ab , \quad t_1=\frac{2bx_0}{a+b}, \quad 2t_2=\frac{a-b}{a+b},
$$
and all other moments are equal to zero. For the ellipse the function $F$ is
\beq\label{h12}
F=\frac{1}{4}\, t_0^2\log t_0 -\frac{3}{8}\, t_0^2 -\frac{1}{4}\, t_0^2
\log (1-4t_2^2)+\frac{t_0t_1^2}{2(1-2t_2)}
\eeq
(see \cite{WZ00}). One can check that this function does satisfy 
equation (\ref{d3a}). 
 
\section{One-variable reductions}

In this section we consider reductions of the dispersionless 
constrained Toda hierarchy. For one-variable reductions, the dependence of 
$w(z)$ on the times is implemented by means of a single variable
$\lambda =\lambda ({\bf t})$, i.e., $w(z;{\bf t})=w(z, \lambda ({\bf t}))$.
This means that instead of the function of infinitely many variables
$w(z;{\bf t})$ we now deal with a function of two variables $w(z, \lambda )$. 
Our goal is to find all possible forms of this function consistent with the
infinite hierarchy. The reduction is an exceptional non-generic solution. 
We will see that such solutions correspond to conformal maps of domains with 
slits. 

The derivation below is parallel to that for reductions of the dispersionless
Toda hierarchy \cite{ATZ21} (see also \cite{TTZ06,Z07}). 
Assuming that $w(z)=w(z, \lambda )$ and using the chain rule of differentiating, we have:
$$
(\p_{t_0}+D(\zeta ))\log w(z)=\p_{\lambda }\log w(z)
\cdot (\p_{t_0}+D(\zeta ))\lambda 
$$
and
\beq\label{or0}
(\p_{t_0}+D(z))\lambda =
\frac{(\p_{t_0}+D(z))\log r}{\p_{\lambda}\log r}.
\eeq
We recall that $\log r =-\log \lim\limits_{z\to \infty}(w(z)/z)$.
Tending $b\to \infty$ in equation (\ref{d6}), we get:
\beq\label{or1}
\p_{t_0}\log w(z)=-(\p_{t_0}+D(z))\log r.
\eeq
Therefore, equation (\ref{or0}) can be written as
\beq\label{or2}
(\p_{t_0}+D(z))\lambda =-
\frac{\p_{\lambda} \log w(z)}{\p_{\lambda}\log r}\, 
\p_{t_0}\lambda .
\eeq
Substituting this into (\ref{d6})
and assuming that $\p_{t_0}\lambda$ is not identically zero, we get:
$$
\p_{\lambda }\log w(z)\p_{\lambda }\log w(\zeta )=\p_{\lambda}\log r
\left [\p_{\lambda}w(z)\Bigl (\frac{1}{w(z)-w(\zeta )}-\frac{w(\zeta )}{w(z)w(\zeta )-1}
\Bigr )\right.
$$
$$
\left. -\p_{\lambda}w(\zeta )\Bigl (\frac{1}{w(z)-w(\zeta )}+\frac{w(z )}{w(z)w(\zeta )-1}
\Bigr )\right ]
$$
or, after rearranging,
$$
w(z)+w^{-1}(z)+\p_{\lambda}\log r \frac{w(z)-w^{-1}(z)}{\p_{\lambda}\log w(z)}=
w(\zeta )+w^{-1}(\zeta )+\p_{\lambda}\log r 
\frac{w(\zeta )-w^{-1}(\zeta )}{\p_{\lambda}\log w(\zeta )}.
$$
The left hand side depends only on $z$ while the right hand side depends only on $\zeta$.
It then follows from this equation that
\beq\label{or3}
\eta (\lambda ):= 
w(z)+w^{-1}(z)+\p_{\lambda}\log r \frac{w(z)-w^{-1}(z)}{\p_{\lambda}\log w(z)}
\eeq
does not depend on $z$. Equation (\ref{or3}) can be read as the differential equation
for the function $w(z)=w(z, \lambda )$:
\beq\label{or4}
\p_{\lambda}\log w(z)=-\frac{w(z)-w^{-1}(z)}{w(z)+w^{-1}(z)-\eta (\lambda )}
\, \p_{\lambda}\log r,
\eeq
where $\eta (\lambda )$ is a real-valued function of $\lambda$. 
Setting
\beq\label{or5}
\eta (\lambda )=e^{i\xi (\lambda )}+e^{-i\xi (\lambda )},
\eeq
where $\xi (\lambda )$ is another real-valued function of $\lambda$, we can
represent equation (\ref{or4}) in the form
\beq\label{or6}
\p_{\lambda}\log w(z)=-\frac{1}{2}
\left (\frac{w(z)+e^{i\xi (\lambda )}}{w(z)-e^{i\xi (\lambda )}}+
\frac{w(z)+e^{-i\xi (\lambda )}}{w(z)-e^{-i\xi (\lambda )}}\right )
\p_{\lambda}\log r.
\eeq
In particular, one can choose $\lambda =\log r$, then the equation simplifies:
\beq\label{or7}
\frac{\p \log w(z)}{\p \log r}=-\frac{1}{2}
\left (\frac{w(z)+e^{i\xi }}{w(z)-e^{i\xi }}+
\frac{w(z)+e^{-i\xi }}{w(z)-e^{-i\xi }}\right ).
\eeq

We call equation (\ref{or6}) (or (\ref{or7})) {\it symmetric radial L\"owner equation}.
Its right hand side is the half-sum of the right hands sides of the radial 
L\"owner equations with the driving functions $\xi (\lambda )$ and $-\xi (\lambda )$. 
This is similar to the quadrant L\"owner equation \cite{T13} 
which appears as a reduction condition
in the dispersionless BKP (or CKP) hierarchy\footnote{As is shown in 
\cite{Z21}, the dispersionless limits of the BKP and CKP hierarchies are
the same.}
and whose right hand side is the 
half-sum of the right hands sides of the chordal 
L\"owner equations with the driving functions 
$\xi (\lambda )$ and $-\xi (\lambda )$. Moreover, in a certain scaling limit
the symmetric radial L\"owner equation becomes the quadrant L\"owner equation.
To wit, setting $w(z)=e^{i\epsilon p(z)}$, $r=e^{-\epsilon^2 u/2}$,
$\xi (\lambda )\to \epsilon \xi (\lambda )$ and 
taking the limit $\epsilon \to 0$, we obtain the quadrant L\"owner equation
\beq\label{quad}
\p_{\lambda}p(z)=-\frac{1}{2}\left ( \frac{1}{p(z)-\xi (\lambda )}
+\frac{1}{p(z)+\xi (\lambda )}\right )\p_{\lambda}u.
\eeq
As is shown in Section \ref{section:symmetric}, 
solutions of the symmetric radial L\"owner
equation are functions which define conformal mappings from the exterior
of the unit circle with two symmetric slits to the exterior of the unit circle.

The dependence of $\lambda$ on the times $t_k$ is determined by a system of
equations of the hydrodynamic type. They follow from equation (\ref{or1}) 
which can be written as
$$
(\p_{t_0}+D(z))\lambda =\frac{(\p_{t_0}+D(z))\log r}{\p_{\lambda}\log r}=
-\frac{\p_{t_0}\log w(z)}{\p_{\lambda}\log r}=-
\frac{\p_{\lambda}\log w(z)}{\p_{\lambda}\log r}\, \p_{t_0}\lambda .
$$
Plugging here the symmetric L\"owner equation (\ref{or6}), we obtain
\beq\label{or8}
D(z)\lambda =\left (\frac{e^{i\xi (\lambda )}}{w(z)-e^{i\xi (\lambda )}}+
\frac{e^{-i\xi (\lambda )}}{w(z)-e^{-i\xi (\lambda )}}\right ) \p_{t_0}\lambda .
\eeq
Taking the $w$-derivative of equation (\ref{d9}) defining the Faber polynomials,
we get
\beq\label{or9}
\frac{e^{\pm i\xi (\lambda )}}{w(z)-e^{\pm i\xi (\lambda )}}=
\sum_{k\geq 1}\frac{z^{-k}}{k}\, \phi_k(e^{\pm i\xi} ), \quad
\phi_k(w)\equiv w\p_w B_k(w).
\eeq
Therefore, equation (\ref{or8}) is equivalent to the following infinite
system of equations of hydrodynamic type:
\beq\label{or10}
\p_{t_k}\lambda =\Bigl (\phi_k(e^{i\xi (\lambda )})+\phi_k(e^{-i\xi (\lambda )})
\Bigr )\p_{t_0}\lambda .
\eeq
The hodograph equation
\beq\label{or11}
t_0+\sum_{k\geq 1}t_k \Bigl (\phi_k(e^{i\xi (\lambda )})+\phi_k(e^{-i\xi (\lambda )})
\Bigr )
=R(\lambda )
\eeq
gives a general solution in an implicit form. Here $R(\lambda )$ is an 
arbitrary function of $\lambda$. 

\section{Multivariable reductions}

A multivariable
($N$-variable) reduction of the dispersionless constrained Toda hierarchy
means that the function $w(z)$ depends on the times ${\bf t}$ 
through $N$ functions $\lambda_j=\lambda_j({\bf t})$, i.e., 
$w(z)=w(z; {\bf t})=w(z; \lambda_1({\bf t}), \ldots , \lambda_N({\bf t}))$. 
Below we prove that solutions of a system of symmetric radial L\"owner 
equations give solutions to the hierarchy.

We consider the system of $N$ symmetric radial L\"owner equations of the form 
(\ref{or6}) which 
characterize the dependence of $w(z)=w(z;\lambda_1, \ldots , \lambda_N)$ 
on the variables $\lambda_j$:
\beq\label{m1}
\frac{\p \log w(z)}{\p \lambda_j}=-\frac{1}{2}
\left (\frac{w(z)+\nu_j}{w(z)-\nu_j}+
\frac{w(z)+\bar \nu_j}{w(z)-\bar \nu_j}\right )
\frac{\p \log r}{\p \lambda_j}, \quad \nu_j :=e^{i\xi_j}.
\eeq
The driving functions $\xi_j = \xi_j (\{\lambda_i\})$ are real-valued, so
$|\nu_j|=1$, i.e., $\bar \nu_j =\nu_j^{-1}$. 
The compatibility conditions of this system are
$$\displaystyle{\frac{\p }{\p \lambda_k}\frac{\p \log w(z)}{\p \lambda_j}=
\frac{\p }{\p \lambda_j}\frac{\p \log w(z)}{\p \lambda_k}}.$$ 
In order to resolve these conditions, one should substitute 
equations (\ref{m1}) and cancel all poles.
A long but straightforward calculation shows that the compatibility
conditions are equivalent to
the following system of the Gibbons-Tsarev type:
\beq\label{m2}
\left \{
\begin{array}{l}
\displaystyle{
\frac{\p \nu_j}{\p \lambda_k}=\frac{\nu_j}{2}\left (
\frac{\nu_k+\nu_j}{\nu_k-\nu_j}+\frac{\bar \nu_k+\nu_j}{\bar \nu_k-\nu_j}\right )
\frac{\p \log r}{\p \lambda_k},}
\\ \\
\displaystyle{
\frac{\p^2 \log r}{\p \lambda_j \p \lambda_k}=
2\left (\frac{\nu_j \nu_k}{(\nu_j-\nu_k)^2}+\frac{\nu_j \bar \nu_k}{(\nu_j-\bar \nu_k)^2}
\right ) \frac{\p \log r}{\p \lambda_j}\, \frac{\p \log r}{\p \lambda_k}.}
\end{array}
\right.
\eeq

Now we are going to use the fact that each $\lambda_j$
is a function of the times:
$w(z;{\bf t})=w(z;\{\lambda_j({\bf t})\})$.
Applying the chain rule of differentiating, in the case of $N$-variable reduction
we can rewrite equation (\ref{d6}) of the dispersionless constrained Toda
hierarchy in the form
\beq\label{m3}
\sum_{j=1}^N D(\zeta )\lambda_j \cdot \p_{\lambda_j}\log w(z)=\sum_{j=1}^N
\p_{t_0}\lambda_j \cdot 
\p_{\lambda_j}\log \left (\frac{w(\zeta )-w^{-1}(z)}{w(\zeta )-w(z)}\right ).
\eeq
A rather long but straightforward calculation using the symmetric L\"owner
equation (\ref{m1}) yields
\[
 \begin{split}
    &\p_{\lambda_j}\log 
    \left (\frac{w(\zeta )-w^{-1}(z)}{w(\zeta )-w(z)}\right )
\\
    &=
    -\frac{1}{2}
    \left (
      \frac{\nu_j}{w(\zeta )-\nu_j}
     +\frac{\bar \nu_j}{w(\zeta )-\bar \nu_j}
    \right) 
    \left (
       \frac{w(z)+\nu_j}{w(z)-\nu_j}
      +\frac{w(z)+\bar \nu_j}{w(z)-\bar \nu_j}
    \right )
    \p_{\lambda_j}\log r.
\\
    &=
    \left (
      \frac{\nu_j}{w(\zeta )-\nu_j}
     +\frac{\bar \nu_j}{w(\zeta )-\bar \nu_j}
    \right) 
    \frac{\p \log w(z)}{\p \lambda_j}.
 \end{split}
\]
Therefore, if we introduce the dependence of the $\lambda_j$'s on the times by
means of the relations
\beq\label{m4}
D(z)\lambda_j =\left (\frac{\nu_j}{w(z)-\nu_j}+
\frac{\bar \nu_j}{w(z)-\bar \nu_j}\right ) \p_{t_0}\lambda_j,
\eeq
the equation (\ref{m3}) will be satisfied identically. 

As it follows from (\ref{or9}), equation (\ref{m4}) is equivalent to an infinite system
of partial differential equations of hydrodynamic type:
\beq\label{m5}
\frac{\p \lambda_j}{\p t_k}=\Bigl (\phi_{j,k}(\{\lambda_i\})+
\overline{ \phi_{j,k}(\{\lambda_i\})}\, \Bigr ) \frac{\p \lambda_j}{\p t_0}, 
\quad \phi_{j,k}=\nu_j B'_k (\nu_j), 
\eeq
where $B'_k(w)\equiv \p_w B_k(w)$. The generating function of $\phi_{j,k}$'s is
\beq\label{m6}
\begin{array}{c}
\displaystyle{
\sum_{k\geq 1}\Bigl (\phi_{j,k}(\{\lambda_i\})+
\overline{ \phi_{j,k}(\{\lambda_i\})}\, \Bigr )\frac{z^{-k}}{k}=
\frac{\nu_j}{w(z)-\nu_j}+
\frac{\bar \nu_j}{w(z)-\bar \nu_j}}
\\ \\
\displaystyle{
:=Q\Bigl (w(z, \{\lambda_i\}), \nu_j (\{\lambda_i\})\Bigr ).}
\end{array}
\eeq

Now we are going to show that the system (\ref{m5}) is consistent and can be solved by Tsarev's
generalized hodograph method \cite{Tsa90}. It can be directly verified that
the compatibility condition of the system (\ref{m5}) is
\beq\label{m7}
\frac{\p_{\lambda_j} {\rm Re}\, \phi_{i,n}}{{\rm Re}\,\phi_{j,n}-{\rm Re}\,\phi_{i,n}}=
\frac{\p_{\lambda_j} {\rm Re}\, \phi_{i,n'}}{{\rm Re}\, \phi_{j,n'}-{\rm Re}\,\phi_{i,n'}}
\quad \mbox{for all $i\neq j, \, n, \, n'$}.
\eeq
In other words, the condition is
that 
\beq\label{m8}
\Gamma_{ij}:=\frac{\p_{\lambda_j} {\rm Re}\, 
\phi_{i,n}}{{\rm Re}\,\phi_{j,n}-{\rm Re}\,\phi_{i,n}}
\eeq
does not depend on $n$. It is easy to see that this is equivalent to 
the $z$-independence of the ratio
$$
\frac{\p_{\lambda_j}Q(w(z), \nu_i)}{Q(w(z), \nu_j)-Q(w(z), \nu_i)},
$$
where $Q$ is the generating function (\ref{m6}). If this holds, then
\beq\label{m9}
\Gamma_{ij}=\frac{\p_{\lambda_j}Q(w(z), \nu_i)}{Q(w(z), \nu_j)-Q(w(z), \nu_i)}.
\eeq
A direct calculation which makes use of the symmetric L\"owner equation (\ref{m1})
and the Gibbons-Tsarev equations (\ref{m2}) gives
\beq\label{m10}
\Gamma_{ij}=\left (\frac{\nu_i\nu_j}{(\nu_i-\nu_j)^2}+
\frac{\nu_i\nu_j}{(\nu_i\nu_j-1)^2}\right ) \p_{\lambda_j}\log r .
\eeq

Let $R_i=R_i(\{\lambda _j\})$
($i=1,\ldots ,N$) satisfy the system of equations  
\beq\label{m11}
\frac{\p R_i}{\p \lambda_j}=\Gamma_{ij}(R_j-R_i), \qquad
i,j=1, \ldots , N, \quad i\neq j,
\eeq
where $\Gamma_{ij}$ is defined in (\ref{m10}) (for $N=1$ this condition is void). We
claim that
the system (\ref{m11}) is compatible in the sense of Tsarev \cite{Tsa90}.
To see this, we note that $\Gamma_{ij}$ 
can be expressed as
\beq\label{m12}
\Gamma_{ij}=\frac{1}{2}\, \p_{\lambda_j}\log g_i, \quad
g_i=\frac{\p \log r}{\p \lambda_i}.
\eeq
This directly follows from the Gibbons-Tsarev equations (\ref{m2}).
It is then obvious that 
\beq\label{m13}
\frac{\p \Gamma_{ij}}{\p \lambda_k}=\frac{\p \Gamma_{ik}}{\p \lambda_j}, \qquad i\neq j\neq k,
\eeq
which are Tsarev's compatibility conditions. 
This means that the system (\ref{m5}) is semi-Hamiltonian.
The main geometric object associated with a semi-Hamiltonian system is a diagonal metric.
The quantities $g_i=g_{ii}$ are components of this metric while $\Gamma_{ij}=\Gamma_{ij}^{i}$ 
are the corresponding Christoffel symbols.
Moreover, from (\ref{m12}) it is clear that the metric $g_i$ is of Egorov type, i.e.,
\beq\label{m14}
\frac{\p g_i}{\p \lambda_k}=\frac{\p g_k}{\p \lambda_i}.
\eeq

Assume that $R_i$ satisfy the system (\ref{m11}). 
Then the same argument as 
in the proof of Theorem 10 of Tsarev's paper \cite{Tsa90} shows that
if $\lambda_i$ is defined implicitly by the
hodograph relations
\beq\label{m15}
t_0+2{\rm Re}\sum_{k\geq 1}t_k \phi_{i,k}(\{\lambda_j \})
=R_i(\{\lambda_j\} ),
\eeq
then $\lambda_j$ satisfy (\ref{m5}). 

We have found sufficient conditions for $N$-variable 
diagonal reductions of the dispersionless constrained Toda hierarchy. 
The reduction is given by a system of $N$ symmetric
L\"owner equations (\ref{m1}) for a function $w(z, \lambda_1, \ldots , \lambda_N)$
supplemented by a diagonal system of hydrodynamic type (\ref{m5}) 
for the variables 
$\lambda_j$, $j=1, \ldots , N$.

\section{The symmetric radial L\"owner equation}

\label{section:symmetric}

In this section we clarify the geometric meaning of the symmetric radial
L\"owner equation (\ref{or6}). We will sketch the derivation of a
differential equation obeyed by the conformal map of the unit disk with
two symmetric slits to the exterior of the unit disk which turns out to
be the symmetric radial L\"owner equation. Rigorous proof is in
\appref{app:proof-rad-sym-loewner}.

By $\UU$ we denote the unit disk $|z|<1$ and $\UU ^c =\CC \setminus \UU$
its compliment.  Let $\Gamma : [0, \infty )\to \UU ^c$ be 
a simple curve
with no self-intersections in the upper half plane with the condition
$|\Gamma (0)|=1$ and $\bar \Gamma$ be the mirror symmetric curve with
respect to the real axis: $\bar \Gamma (t )=\overline{\Gamma (t )}$. The
curves $\Gamma$ and $\bar \Gamma$ do not intersect, 
as $\Gamma(t)$ is in the upper half plane. 
Let $\Gamma_{t}$ be
the arc of the curve $\Gamma$ corresponding to the values of the
parameter from $0$ to $t$: $\Gamma_t : [0, t]\to \UU ^c$, and $\bar
\Gamma_t$ be the symmetric arc in the lower half plane.

Let $g(z, t)$ be the univalent conformal map from the domain
$\DD_t =$ (interior of $\UU^c$) $\setminus (\Gamma_t \cup \bar
\Gamma_t)$ to $\interior{\UU^c} = $ (interior of $\UU^c$) normalized by
the condition
\begin{equation}
\label{sl1} 
    g(z,t)=z/r(t) +O(1), \quad z\to \infty , \quad r(t)\in \RR_{+}.  
\end{equation}
The Riemann mapping theorem ensures that such a map exists and is
unique. The quantity $r(t)$ is called the conformal radius of the domain
$\DD_t$
relative to infinity. 

\begin{figure}[h]
 \centering
 \def\svgwidth{.9\linewidth}
 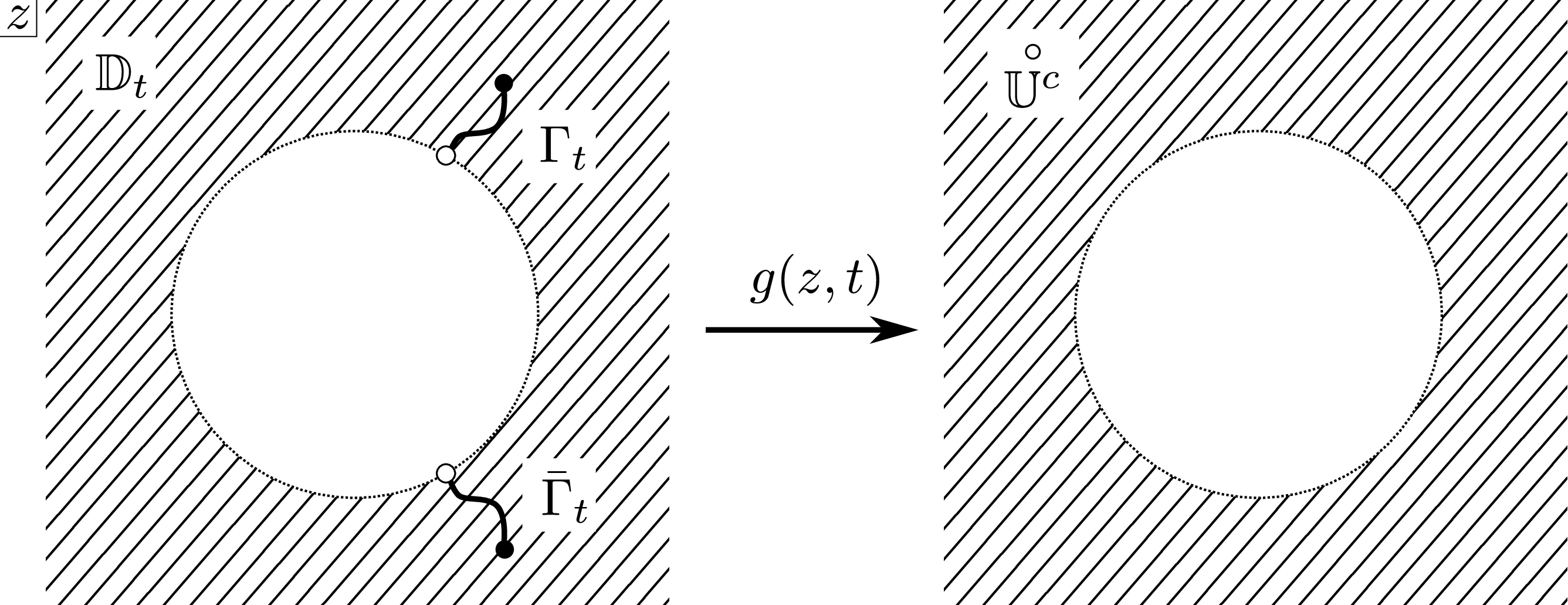
 \caption{Conformal map $g(z,t)$ from a symmetric slit domain to the
 outside of the unit disk.}
\label{fig:g(z,t)}
\end{figure}

Our goal is to derive a differential equation for the function $g(z,t)$
in the variable $t$.  According to our convention, $g(z,0)=z$, which is
going to be the initial condition for the differential equation. Note
that since the domain $\DD_t$ is symmetric
and $g(z,t)$ is normalised as \eqref{sl1} ($r(t)\in\Real_+$), 
the function $g(z, t)$ enjoys the property
\begin{equation}
\label{sym}
    g(z,t)=\overline{g(\bar z, t)}. 
\end{equation}
(See \lemref{lem:reflection}.)

The main technical tool is the complex Poisson formula (or the Schwarz
integral formula) which is a version of the Cauchy integral
formula. Given
a bounded continuous function $f(z)$ in $\UU^c$, holomorphic in its
interior $\interior{\UU^c}$ (thus holomorphic also at $z=\infty$ by
Riemann's removable singularity theorem), the complex Poisson formula
reads
\beq\label{sl2}
    f(z)
    =
    \frac{1}{2\pi}\int_{-\pi}^{\pi} 
     {\rm Re}\, f(e^{i\theta}) \frac{z+e^{i\theta}}{z-e^{i\theta}}\, 
    d\theta 
    +
    i\,{\rm Im}\, f(\infty ).
\eeq
We define a map $h(z;s,t)$ ($0<s<t$) for $z\in \UU^c$ by
\beq\label{sl3}
h(z; s, t)=g(g^{-1}(z,t), s)=\frac{r(t)}{r(s)}\, z +O(1),
\eeq
where $g^{-1}$ is the function inverse to $g$. (See
\figref{fig:h(z;s,t)}.) 

\begin{figure}[h]
 \centering
 \def\svgwidth{.8\linewidth}
 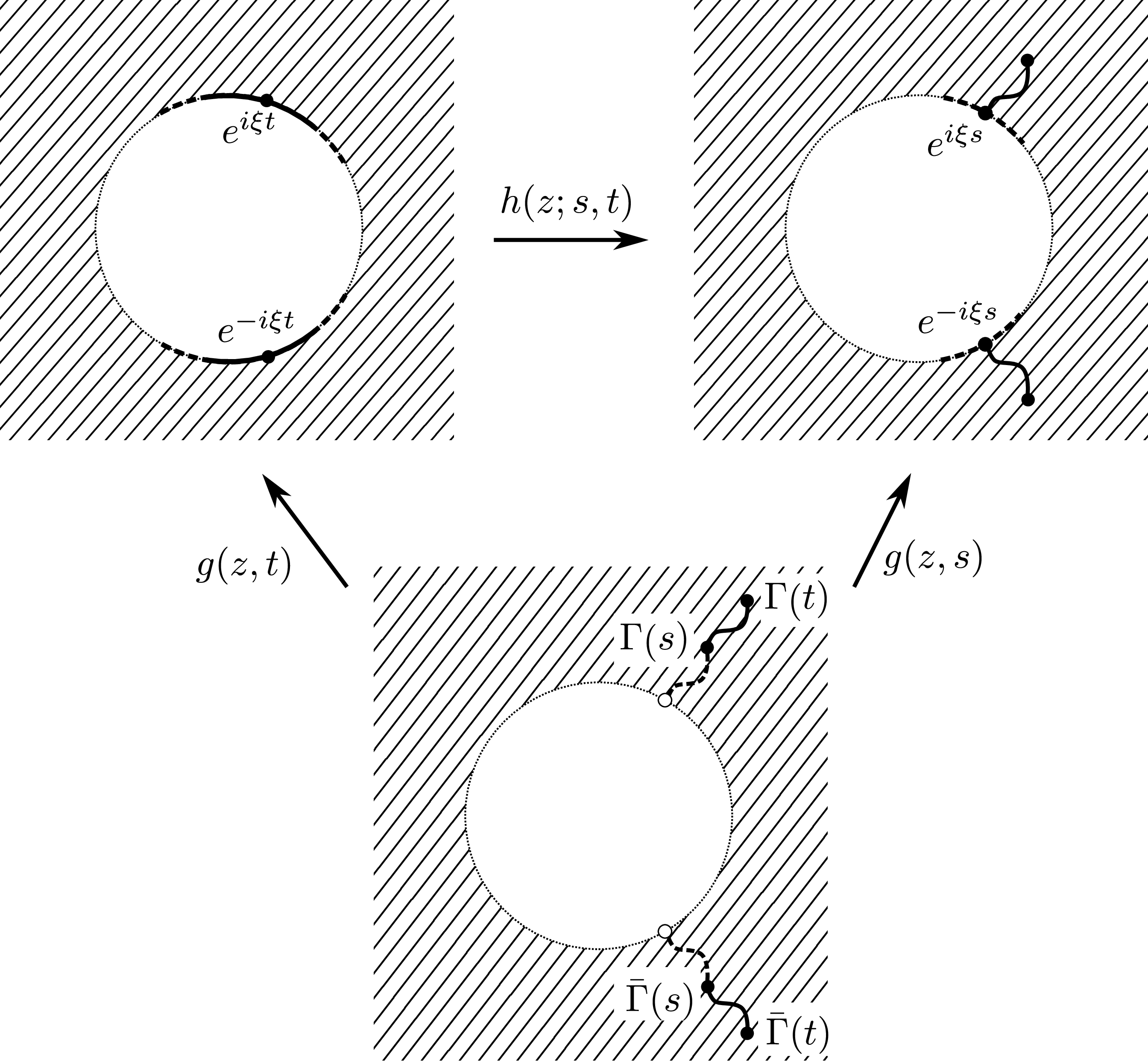
 \caption{Conformal map $h(z;s,t)=g(g^{-1}(z,t), s)$ mapping the outside
 of the unit disk to a symmetric slit domain.}
\label{fig:h(z;s,t)}
\end{figure}

We want to apply the complex Poisson formula to the function
$\log(h(z;s,t)/z)$.  Hence, hereafter we {\em assume} that $g^{-1}(z,t)$
and hence $h(z;s,t)$ are continuously extended to $\UU^c$, the closure
of the original domain of definition $\interior{\UU^c}$. (In
\appref{app:proof-rad-sym-loewner} we shall see that such extension is
really possible for certain class of functions.)

We get:
\beq\label{sl4}
\log \frac{h(z;s,t)}{z}=\frac{1}{2\pi}
\int_{-\pi}^{\pi} \log |h(e^{i\theta}); s,t)|
\, \frac{z+e^{i\theta}}{z-e^{i\theta}}\, d\theta .
\eeq
In particular, tending $z\to \infty$, we obtain the following corollary
of (\ref{sl4}):
\beq\label{sl6}
\log r(t)-\log r(s)=\frac{1}{2\pi}
\int_{-\pi}^{\pi} \log |h(e^{i\theta}); s,t)|d\theta .
\eeq
Substituting $z\mapsto g(z, t)$ in (\ref{sl4}), we can write:
\beq\label{sl5}
\log \frac{g(z,s)}{g(z, t)}=\frac{1}{2\pi}
\int_{-\pi}^{\pi} \log |h(e^{i\theta}); s,t)|
\, \frac{g(z, t)+e^{i\theta}}{g(z, t)-e^{i\theta}}\, d\theta .
\eeq
The property (\ref{sym}) implies the similar property of the function
$h(z; s,t)$: $h(z; s,t)=\overline{h(\bar z; s,t)}$ which allows one to
rewrite the relations (\ref{sl6}), (\ref{sl5}) as follows:
\beq\label{sl6a}
\log r(t)-\log r(s)=\frac{1}{\pi}
\int_{0}^{\pi} \log |h(e^{i\theta}); s,t)|d\theta ,
\eeq
\beq\label{sl5a}
\log \frac{g(z,s)}{g(z, t)}=\frac{1}{2\pi}
\int_{0}^{\pi} \log |h(e^{i\theta}); s,t)|
\left (\frac{g(z, t)+e^{i\theta}}{g(z, t)-e^{i\theta}}
+\frac{g(z, t)+e^{-i\theta}}{g(z, t)-e^{-i\theta}}\right )
d\theta .
\eeq

Let the image of the tip of the curve $\Gamma_t$ be $e^{i\xi (t)}$:
$g(\Gamma (t), t)=e^{i\xi (t)}$. Then the image of the tip of the curve
$\bar \Gamma_t$ is $e^{-i\xi (t)}$. Note that $\log |h(e^{i\theta};
s,t)|=0$ if $e^{i\theta}$ is mapped by $h$ to the boundary of $\UU^c$,
i.e. to a point on the unit circle. Accordingly, $\log |h(e^{i\theta};
s,t)|\neq 0$ if the point $e^{i\theta}$ lies in a neighborhood of
$e^{i\xi (t)}$ (in an arc of the unit circle containing the point
$e^{i\xi (t)}$) or in a neighborhood of $e^{-i\xi (t)}$. If $s\to t$
from below, this neighborhood shrinks to a point ($e^{i\xi (t)}$ or
$e^{-i\xi (t)}$).  Therefore, equations (\ref{sl6a}), (\ref{sl5a}) imply
the following differential equation by the mean value
theorem\footnote{Of course in this argument derivative $\der_t$ in
\eqref{sl7} should be interpreted as the left derivative, but we can
show the same equation for the right derivative, too. See
\appref{app:proof-rad-sym-loewner}.}:
\begin{equation}
\label{sl7}
    \p_t \log g(z, t)
    =
    -\frac{1}{2}\left (
      \frac{g(z, t)+e^{i\xi (t)}}{g(z, t)-e^{i\xi (t)}}
     +\frac{g(z, t)+e^{-i\xi (t)}}{g(z, t)-e^{-i\xi (t)}}
    \right )
    \p_t \log r, 
\end{equation}
which is the symmetric radial L\"owner equation (\ref{or6}) ($t\to
\lambda$, $g(z, t)\to w(z)$). Thus we have shown that a family of
conformal mappings $g(z,t)$ ($g(z,0)=z$) which conformally map the
exterior of the unit circle with two non-intersecting symmetric slits
$\Gamma_t$ (in the upper half plane) and $\bar \Gamma_t$ (in the lower
half plane) to the exterior of the unit circle gives a solution to this
equation.

\section{Conclusion}

In this paper we have introduced the dispersionless version 
of the constrained Toda hierarchy. It was formulated in two 
equivalent ways: a) by means of the Lax formalism and b) by means 
of the Hirota-like equation for the $F$-function (the dispersionless 
limit of logarithm of the tau-function). The geometric meaning of the 
dispersionless constrained Toda hierarchy was clarified. We have shown that
conformal maps of domains symmetric 
under reflection with respect to the real axis to the reference domain 
(the unit disk) give solutions to the hierarchy.

We have also studied finite-dimensional reductions 
of the dispersionless constrained Toda hierarchy. It was shown that the
consistency condition for one-variable reduction is a differential 
equation of the L\"owner type which we call the symmetric radial
L\"owner equation. The geometric meaning of the latter was clarified. 
We have shown that solutions to this equation are conformal maps 
of the exterior of the unit circle with two symmetric curved slits
to the exterior of the unit circle.

\appendix

\section{Appendix: Proof of the symmetric radial L\"owner equation}
\label{app:proof-rad-sym-loewner}

In this appendix we derive the symmetric radial L\"owner equation for an
evolution family, a two-parameter family of conformal mappings,
rigorously. In particular extendability of the maps is discussed
here. The last part of the proof is almost the same as the arguments in
\secref{section:symmetric} but with some omitted details.

\medskip
Assume that a Jordan curve $\gamma$ is in the upper half plane and tends
to $\infty$:
\begin{equation}
    \gamma: [0,+\infty) \to \Comp, \quad 
    \Im \gamma(t)>0, \quad \lim_{t\to\infty} \gamma(t) = \infty.
\label{gamma}
\end{equation}
We denote its complex conjugate by $\bar\gamma$:
\begin{equation}
    \bar\gamma(t) := \overline{\gamma(t)}.
\label{gamma-bar}
\end{equation}
Let $D_t$ ($t\geq 0$) be the following domain, which is symmetric with
respect to the real axis:
\begin{equation}
    D_t 
    := 
    \Comp \setminus 
    \bigl( \gamma([t,+\infty)) \cup \bar\gamma([t,+\infty) ) \bigr).
\label{Dt}
\end{equation}

As in \secref{section:symmetric}, the following lemma is essential.

\begin{lemma}
\label{lem:reflection}
 Let $D$ be a simply-connected domain in $\Comp$ symmetric with respect
 to the real axis: $z\in D \Longleftrightarrow \bar z \in D$. For
 simplicity we assume that $0\in D$. Then the holomorphic bijection
 $f:\bbD\to D$ normalised by $f(0)=0$ and $f'(0)>0$, the existence of
 which is guaranteed by Riemann's mapping theorem, satisfies
\begin{equation}
    f(\bar z) = \overline{f(z)}
\label{f:symmetric}
\end{equation}
 for $z\in D$.
\end{lemma}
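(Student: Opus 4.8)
The plan is to exhibit a second Riemann map of $D$ with the same normalisation as $f$ and then invoke the uniqueness clause of the Riemann mapping theorem. Concretely, I would define $g:\bbD\to\Comp$ by $g(z):=\overline{f(\bar z)}$. Since $z\mapsto\bar z$ maps $\bbD$ into itself and $f$ is holomorphic, writing $f(w)=\sum_{n\geq 1}a_n w^n$ (the sum starts at $n=1$ because $f(0)=0$) gives $g(z)=\sum_{n\geq 1}\bar a_n z^n$, so $g$ is holomorphic on $\bbD$.

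The next step is to check that $g$ is a holomorphic bijection from $\bbD$ onto $D$. For $z\in\bbD$ we have $\bar z\in\bbD$, hence $f(\bar z)\in D$, and by the reflection symmetry $z\in D\Longleftrightarrow\bar z\in D$ we get $g(z)=\overline{f(\bar z)}\in D$; thus $g(\bbD)\subseteq D$. Conversely, $g=\iota\circ f\circ\iota$ where $\iota(z)=\bar z$ is an involution both of $\bbD$ and of $D$ (the latter precisely because $D$ is symmetric), and $f$ is a bijection $\bbD\to D$, so $g$ is a bijection $\bbD\to D$.

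Then I would verify the normalisation: $g(0)=\overline{f(\bar 0)}=\overline{f(0)}=0$, and $g'(0)=\overline{f'(0)}$, which equals $f'(0)$ because $f'(0)$ is a positive real number; in particular $g'(0)>0$. Hence $g$ is a holomorphic bijection $\bbD\to D$ with $g(0)=0$ and $g'(0)>0$, so by the uniqueness part of the Riemann mapping theorem $g=f$ on $\bbD$. This reads $\overline{f(\bar z)}=f(z)$, i.e.\ $f(\bar z)=\overline{f(z)}$ for all $z\in\bbD$, which is \eqref{f:symmetric}.

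The argument is essentially routine, and there is no serious obstacle; the only points that genuinely use the hypotheses are (i) that the conjugated–reflected map $g$ actually takes values in $D$ and is onto $D$, which is exactly where the symmetry of $D$ under $z\mapsto\bar z$ enters, and (ii) that the normalisation $f'(0)\in\Real_{>0}$ is preserved under complex conjugation, so that the uniqueness statement of the Riemann mapping theorem applies verbatim to $g$.
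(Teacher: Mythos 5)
Your argument is exactly the paper's proof: the paper also defines $\tilde f(z):=\overline{f(\bar z)}$, notes it is a holomorphic bijection of $\bbD$ onto $D$ with the same normalisation, and concludes $\tilde f=f$ by the uniqueness clause of the Riemann mapping theorem. Your write-up just fills in the routine verifications more explicitly; there is no gap.
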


\begin{proof}
 Note that $\tilde f(z) := \overline{f(\bar z)}$ is also a holomorphic
 bijection between $\bbD$ and $D$, satisfying the normalisation
 conditions $\tilde f(0)=0$, $\tilde f'(0)>0$. Since the Riemann mapping
 theorem claims that such a mapping is unique, $\tilde f = f$, which
 proves \eqref{f:symmetric}.
\end{proof}

We denote the normalised conformal mapping from the unit open disk
$\bbD$ to $D_t$ by $f_t$:
\begin{equation}
    f_t: \bbD \to D_t, \qquad
    f_t(0)=0, \ f'_t(0) > 0.
\label{ft}
\end{equation}
Because of \lemref{lem:reflection} this mapping function satisfies
\begin{equation}
    f_t(\bar z) = \overline{f_t(z)}.
\label{ft:reflection}
\end{equation}

The argument in \S3.3 \cite{Duren} or \S3.2, Chapter 3 of
\cite{kus-sug:19} can be applied to the family of domains $\{D_t\}$ and
we can reparametrise it so that $f_t$ satisfies $f_t'(0)=e^t$:
\begin{equation}
    f_t(z) = e^t z + a_2(t)\, z^2 + a_3(t)\, z^3 + \dotsb.
\label{ft:expansion}
\end{equation}
(The choice $f_t'(0)=e^t$ is just for simplicity. We can take any smooth
positive increasing function $r(t)$ as in \secref{section:symmetric}
instead of $e^t$.) We call the family $\{f_t(z)\}_{t\geq0}$ the {\em
symmetric L\"owner chain}.

For non-negative real numbers $s$ and $t$ satisfying $0\leq s<t<+\infty$
we define the {\em symmetric evolution family}
$\omega_{s,t}:\bbD\to\bbD$ by $\omega_{s,t}:= f_t^{-1}\circ f_s$, which
satisfies $\omega_{s,t}(0)=0$, $f_s=f_t\circ\omega_{s,t}$,
\begin{equation}
    \omega_{s,t}(z) = e^{s-t} z + b_2(t)\, z^2 + b_3(z)\, z^3 + \dotsb,
\label{omegast:expansion}
\end{equation}
and is univalent. Each $\omega_{s,t}$ also satisifes the conditions of
\lemref{lem:reflection}:
\begin{equation}
    \omega_{s,t}(\bar z) = \overline{\omega_{s,t}(z)}.
\label{omegast:reflection}
\end{equation}

\begin{theorem}
\label{thm:sym-rad-loewner}
%
 For a fixed $s\geq0$ and $z\in\bbD$ the function
 $\omega(t)=\omega_{s,t}(z)$ satisfies an ordinary differential equation
\begin{equation}
    \frac{d\omega}{dt}
    =
    - \frac{\omega}{2}
    \left(
     \frac{e^{i \theta(t)}+\omega}{e^{i \theta(t)}-\omega}
     +
     \frac{e^{-i\theta(t)}+\omega}{e^{-i\theta(t)}-\omega}
    \right)
\label{sym-rad-loewner}
\end{equation}
 on $t\in[s,\infty)$ with the initial value condition $w(s)=z$. 
 Here $\theta(t)$ is a real continuous function on $[0,+\infty)$. 

%
\end{theorem}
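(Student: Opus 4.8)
\medskip
\noindent\textbf{Proof plan.}
The plan is to reduce the whole statement to a single ``diagonal'' right-hand derivative and then propagate it with the cocycle identity $\omega_{s,u}=\omega_{t,u}\circ\omega_{s,t}$ ($s\le t\le u$). Concretely, I would show that for each fixed $\sigma\ge0$ and $w\in\bbD$
\[
\lim_{h\downarrow0}\frac{\omega_{\sigma,\sigma+h}(w)-w}{h}
=-\frac{w}{2}\left(\frac{e^{i\theta(\sigma)}+w}{e^{i\theta(\sigma)}-w}
+\frac{e^{-i\theta(\sigma)}+w}{e^{-i\theta(\sigma)}-w}\right),
\]
where $e^{i\theta(\sigma)}$ is the image under $f_\sigma^{-1}$ of the prime end of $D_\sigma$ at the tip $\gamma(\sigma)$ (an intrinsic quantity, independent of the base point). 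Granting this, $\omega_{s,t_0+h}(z)=\omega_{t_0,t_0+h}\bigl(\omega_{s,t_0}(z)\bigr)$ with $w=\omega_{s,t_0}(z)$ yields the right derivative of $\omega(t)=\omega_{s,t}(z)$ at every $t_0\ge s$ in exactly the form \eqref{sym-rad-loewner}; since that right-hand side is continuous in $(t,\omega)$ once $\theta$ is continuous, and a function on an interval with an everywhere-defined continuous right derivative is $C^1$, \eqref{sym-rad-loewner} follows as a genuine ODE with $\omega(s)=z$.

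First I would fix the setup: by the argument of \S3.3 of \cite{Duren} (or \S3.2 of \cite{kus-sug:19}) the family $\{D_t\}$ in \eqref{Dt} is reparametrised so that $f_t'(0)=e^t$, and the symmetric evolution family $\omega_{s,t}=f_t^{-1}\circ f_s$ then satisfies \eqref{omegast:expansion}, the reflection symmetry \eqref{omegast:reflection}, and the cocycle identity. The main obstacle — and the point this appendix is really about — is boundary extendability: each $\omega_{s,t}$ extends to a continuous map $\overline{\bbD}\to\overline{\bbD}$. Here $\omega_{s,t}(\bbD)=f_t^{-1}(D_s)=\bbD\setminus(\sigma_t\cup\bar\sigma_t)$, where $\sigma_t:=f_t^{-1}\bigl(\gamma([s,t))\bigr)$ is a simple arc joining the interior point $f_t^{-1}(\gamma(s))$ to the boundary point $e^{i\theta(t)}$ and $\bar\sigma_t$ is its mirror image; the two arcs are disjoint and stay away from $0$ because $\gamma$ lies in the open upper half plane. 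Such a slit domain has locally connected boundary, so Carath\'eodory's theorem provides the continuous extension (equivalently: $\partial D_s$ is a locally connected continuum on the Riemann sphere, hence $f_s$ extends continuously, and one reads $f_t^{-1}$ on $\overline{D_s}$ in the prime-end sense). At the same time I would record the structure of the boundary values — $|\omega_{s,t}(e^{i\vartheta})|=1$ for all $\vartheta$ except on a closed arc $I_t$ (carried onto the two banks of $\sigma_t$) and its mirror arc $-I_t$ — and the diagonal continuity $\omega_{s,t}\to\mathrm{id}$ as $t\downarrow s$, which forces $I_t$, hence $e^{i\theta(t)}$, to collapse to $e^{i\theta(s)}$ and in particular gives continuity of $\theta(\cdot)$.

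With extendability available the rest is the rigorous version of the heuristic in \secref{section:symmetric}. The function $\log(\omega_{s,t}(z)/z)$ is holomorphic in $\bbD$ and continuous and non-vanishing on $\overline{\bbD}$ (because $0$ is not a boundary value of $\omega_{s,t}$), so the Schwarz/Poisson formula gives
\[
\log\frac{\omega_{s,t}(z)}{z}
=\frac{1}{2\pi}\int_{I_t\cup(-I_t)}\log\bigl|\omega_{s,t}(e^{i\vartheta})\bigr|\,
\frac{e^{i\vartheta}+z}{e^{i\vartheta}-z}\,d\vartheta ,
\]
the integral being supported on $I_t\cup(-I_t)$ by the previous paragraph, with no additive imaginary constant since $\omega_{s,t}'(0)=e^{s-t}>0$. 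Specialising at $z=0$ gives $\frac{1}{2\pi}\int_{I_t\cup(-I_t)}\log|\omega_{s,t}(e^{i\vartheta})|\,d\vartheta=s-t$, and \eqref{omegast:reflection} makes the two arcs contribute equally, so $\frac{1}{2\pi}\int_{I_t}\log|\omega_{s,t}(e^{i\vartheta})|\,d\vartheta=\tfrac12(s-t)$. Now take $t=s+h$ and let $h\downarrow0$: since $I_{s+h}$ shrinks to $\{\theta(s)\}$ and $\vartheta\mapsto(e^{i\vartheta}+z)/(e^{i\vartheta}-z)$ is continuous near $\pm\theta(s)$ for fixed $z\in\bbD$, I may replace that kernel by its values at $e^{\pm i\theta(s)}$ at the cost of an error bounded by $o(1)\cdot\int_{I_{s+h}\cup(-I_{s+h})}\bigl|\log|\omega_{s,s+h}(e^{i\vartheta})|\bigr|\,d\vartheta=o(1)\cdot2\pi h$ (the last equality because $\log|\omega_{s,s+h}|\le0$ and integrates to $2\pi(s-(s+h))=-2\pi h$), obtaining
\[
\log\frac{\omega_{s,s+h}(z)}{z}
=-\frac{h}{2}\left(\frac{e^{i\theta(s)}+z}{e^{i\theta(s)}-z}
+\frac{e^{-i\theta(s)}+z}{e^{-i\theta(s)}-z}\right)+o(h),
\]
so that exponentiating and dividing by $h$ yields the diagonal limit announced at the start. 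In this scheme the Poisson/residue manipulations are routine; the genuinely delicate input is the boundary regularity of the second paragraph — the continuous extension of $\omega_{s,t}$ to $\overline{\bbD}$, the two-arc description of its boundary values, and the collapse $I_t\to\{\theta(s)\}$ (equivalently, continuity of the driving function $\theta$).
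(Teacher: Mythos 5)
Your proposal is correct, and its analytic core --- the Schwarz integral formula applied to $\log(\omega_{s,t}(z)/z)$, whose real boundary values are nonpositive, supported on two shrinking symmetric arcs, and have total mass $2\pi(s-t)$ --- is exactly the paper's. The differences are organizational rather than substantive. First, where the paper spends Subsection A.1 on an explicit square-root opening of the slit (the map $F$ of \eqref{F(w)}), precisely so that Carath\'eodory's theorem can be applied to a Jordan domain and the endpoint $\lambda(t)$ of \eqref{def:lambda} exhibited as a concrete limit, you obtain the boundary extension of $\omega_{s,t}$ by invoking the locally-connected-boundary form of Carath\'eodory's theorem for the slit disk $\bbD\setminus(\sigma_t\cup\bar\sigma_t)$ and define $e^{i\theta(t)}$ via the prime end at the tip; this is a legitimate alternative, and you rightly identify it as the delicate input (the paper's footnote notes that Duren's original treatment has a gap at exactly this point, which \cite{kus-sug:19} fills with the square-root device). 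Second, you compute only the right derivative on the diagonal, transport it with the cocycle identity, and upgrade to a genuine ODE via the lemma that a continuous function with an everywhere-defined continuous right derivative is $C^1$; the paper instead runs the shrinking-arc argument twice ($u\searrow t$ and $t\nearrow u$) to obtain both one-sided derivatives directly. (Either way one still needs continuity of $t\mapsto\omega_{s,t}(z)$ and of $\theta(t)$, which both you and the paper treat somewhat briskly, the paper by deferring to \cite{Duren}.) Third, you control the shrinking-arc limit by an $o(1)$ estimate on the kernel integrated against the negative density $\log|\omega_{s,s+h}(e^{i\vartheta})|$ of known total mass $-2\pi h$, whereas the paper uses the mean value theorem for integrals separately on the real and imaginary parts of the kernel $K$. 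Both routes are sound; yours is marginally more economical, the paper's more self-contained and constructive.
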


By changing the variables $t$, $z$ and $\omega_{s,t}$ by $\lambda:=-t$,
$\zeta:=z^{-1}$, $w(\zeta,\lambda):=1/\omega_{s,t}(1/\zeta)$, we obtain
a conformal mapping from $\Comp\setminus\bbD$ to $\Comp\setminus\bbD$,
which satisfies
\begin{equation}
    \frac{dw}{d\lambda}
    =
    -
    \frac{w}{2}
    \left(
     \frac{w+e^{-i\theta(\lambda)}}{w-e^{-i\theta(\lambda)}}
     +
     \frac{w+e^{i \theta(\lambda)}}{w-e^{i \theta(\lambda)}}
    \right),
\label{sym-rad-loewner:out}
\end{equation}
which is nothing but \eqref{or6}.

\subsection{Behaviour of functions on the boundaries}
\label{subsec:boundary-behaviour}

In this subsectin we prove extendability of $\omega_{s,t}$ to the
boundary of $\bbD$. The variable $t$ is fixed. The argument almost
follows that in \S3.2, Chapter 3 of \cite{kus-sug:19}, which is an
improved version of the argument in \cite{Duren}.

If $0\leq s<t$, $D_t\setminus D_s = \gamma([s,t)) \cup
\bar\gamma([s,t))$. Hence by \eqref{omegast:reflection} the image of
$\omega_{s,t}$ is 
\[
    \omega_{s,t}(\bbD) 
    = 
    \bbD \setminus (\alpha_{s,t} \cup \overline{\alpha_{s,t}}),
\]
where $\alpha_{s,t}:=f_t^{-1}\bigl(\gamma([s,t))\bigr)$ and
$\overline{\alpha_{s,t}}$ is its complex conjugate. 

Let us show that each arc $\alpha_{s,t}$ has an endpoint $\lambda(t)$ on
$\der\bbD$, $\displaystyle\lambda(t)=\lim_{s\nearrow t}
f_t^{-1}(\gamma(s))$, which does not depend on $s\in[0,t)$.

We apply the Carath\'eodory extension theorem by opening the slit
$$\gamma([t,\infty)) \cup \bar\gamma([t,\infty]) \cup \{\infty\}$$ in the
Riemann sphere $\widehat\Comp$ by square root\footnote{In the derivation
of L\"owner's equation in \S3.3 of \cite{Duren} Carath\'eodory's theorem
is applied to a domain whose boundary is not a Jordan curve. In
\cite{kus-sug:19} this gap is filled by opening the slit by square
root. We use this idea here.}. We take a branch of
\begin{equation}
    \zeta = F(w)
    :=
    \sqrt{
     \frac{1-w/\gamma(t)}{1-w/\bar\gamma(t)}
    }
\label{F(w)}
\end{equation}
satisfying $F(0)=1$. Its inverse map is
\begin{equation}
    w = G(\zeta)
    :=
    \gamma(t)\,\bar\gamma(t)
    \frac{1-\zeta^2}{\bar\gamma(t) - \gamma(t)\,\zeta^2}.
\label{G(zeta)}
\end{equation}
Endpoints of the slit are mapped by $F$ to $F(\gamma(t))=0$ and
$F(\bar\gamma(t))=\infty$. The image of the slit is a Jordan closed
curve $\Gamma^+ \cup \{0\} \cup \Gamma^- \cup \{\infty\}$ in
$\widehat\Comp$, where $\Gamma^\pm$ are two connected components of
$F\bigl(\gamma((t,+\infty))\bigr)$ (\figref{fig:extension-ft}).

\begin{figure}[h]
 \centering
 \def\svgwidth{.8\linewidth}
 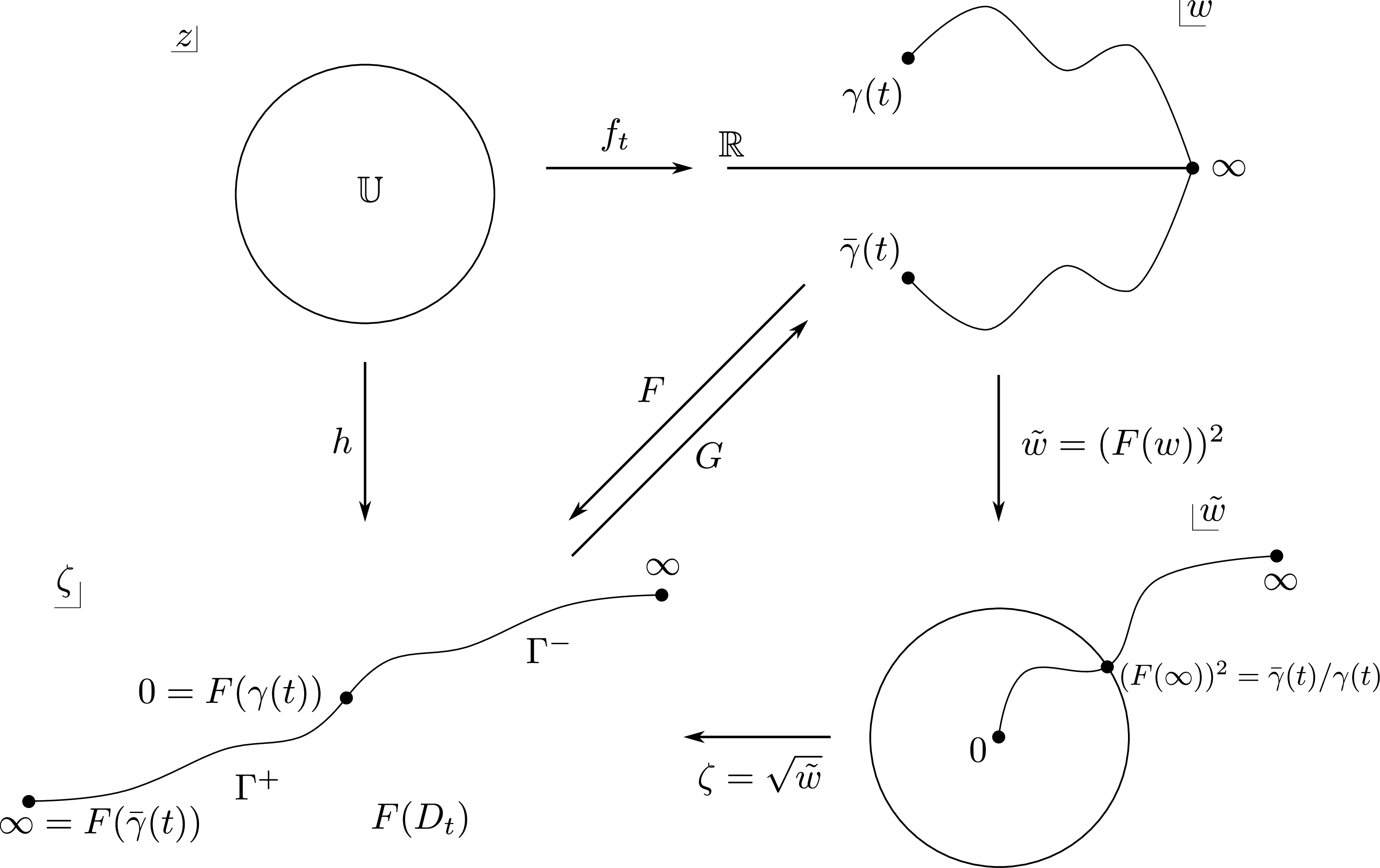
 \caption{Conformal map $F\circ f_t:\bbD\to F(D_t)$ is extendable to
 $\overline\bbD\to\overline{F(D_t)}$.}
\label{fig:extension-ft}
\end{figure}

Applying the Carath\'eodory extension theorem to $F\circ f_t$, we obtain
a homeomorphism $h:\overline\bbD\to\overline{F(D_t)}$ which is
holomorphic in $\bbD$. Thus we obtain an extension of $f_t$ on
$\overline\bbD$: $f_t = G\circ h: \overline\bbD \to
\overline{D_t}=\widehat\Comp$. (We use the same notation for the
extended map as the original one.)

Since $f_t=F^{-1}\circ h$ on $\bbD$, we have
\[
    f_t^{-1}(\gamma(s)) = h^{-1}(F(\gamma(s))), \quad
    f_t^{-1}(\bar\gamma(s)) = h^{-1}(F(\bar\gamma(s)))
\]
for $0\leq s<t$. Hence, when $s$ approaches to $t$ from below, there
exists limits,
\begin{align}
    \lambda(t)
    &:=
    \lim_{s\nearrow t} f_t^{-1}(\gamma(s)) 
    =
    \lim_{s\nearrow t} h^{-1}(F(\gamma(s)))
    =
    h^{-1}(F(\gamma(t))) = h^{-1}(0),
\label{def:lambda}
\\
    \bar\lambda(t)
    &:=
    \lim_{s\nearrow t} f_t^{-1}(\bar\gamma(s)) 
    =
    \lim_{s\nearrow t} h^{-1}(F(\bar\gamma(s)))
    =
    h^{-1}(F(\bar\gamma(t))) = h^{-1}(\infty),
\label{def:lambdabar}
\end{align}
and, due to \eqref{gamma-bar} and \eqref{ft:reflection}, they are
complex conjugate to each other,
\begin{equation}
    \bar\lambda(t) = \overline{\lambda(t)}.
\label{bar(lambda)=lambdabar}
\end{equation}
The proof of (right/left) continuity of $\lambda(t)$ is the same as that
in \S3.3 of \cite{Duren}.

As in \figref{fig:Jtu} we denote the counterclockwise arc from
$\lambda(t)$ to $\bar\lambda(t)$ by $I^+=I^+_t$ and the other arc by
$I^-=I^-_t$: $h(I^\pm)=\Gamma^{\pm}$. Restrictions of
$f_t:\bar\bbD\to\widehat\Comp$ on $I^\pm$, 
$
    f_t^\pm: 
    I^\pm \to
    \gamma([t,+\infty))\cup\{\infty\} \cup \bar\gamma([t,+\infty))
$,
are homeomorphisms. Therefore for any $u \in (t,+\infty)$ the subset
$J_{t,u}:=J_{t,u}^+\cup J_{t,u}^-$ of $\der\bbD$, where
$J^\pm_{t,u}:=(f_t^\pm)^{-1}\bigl(\gamma([t,u))\bigr)$, is an open
arc of $\der\bbD$. 

\begin{figure}[h]
 \centering
 \def\svgwidth{.8\linewidth}
 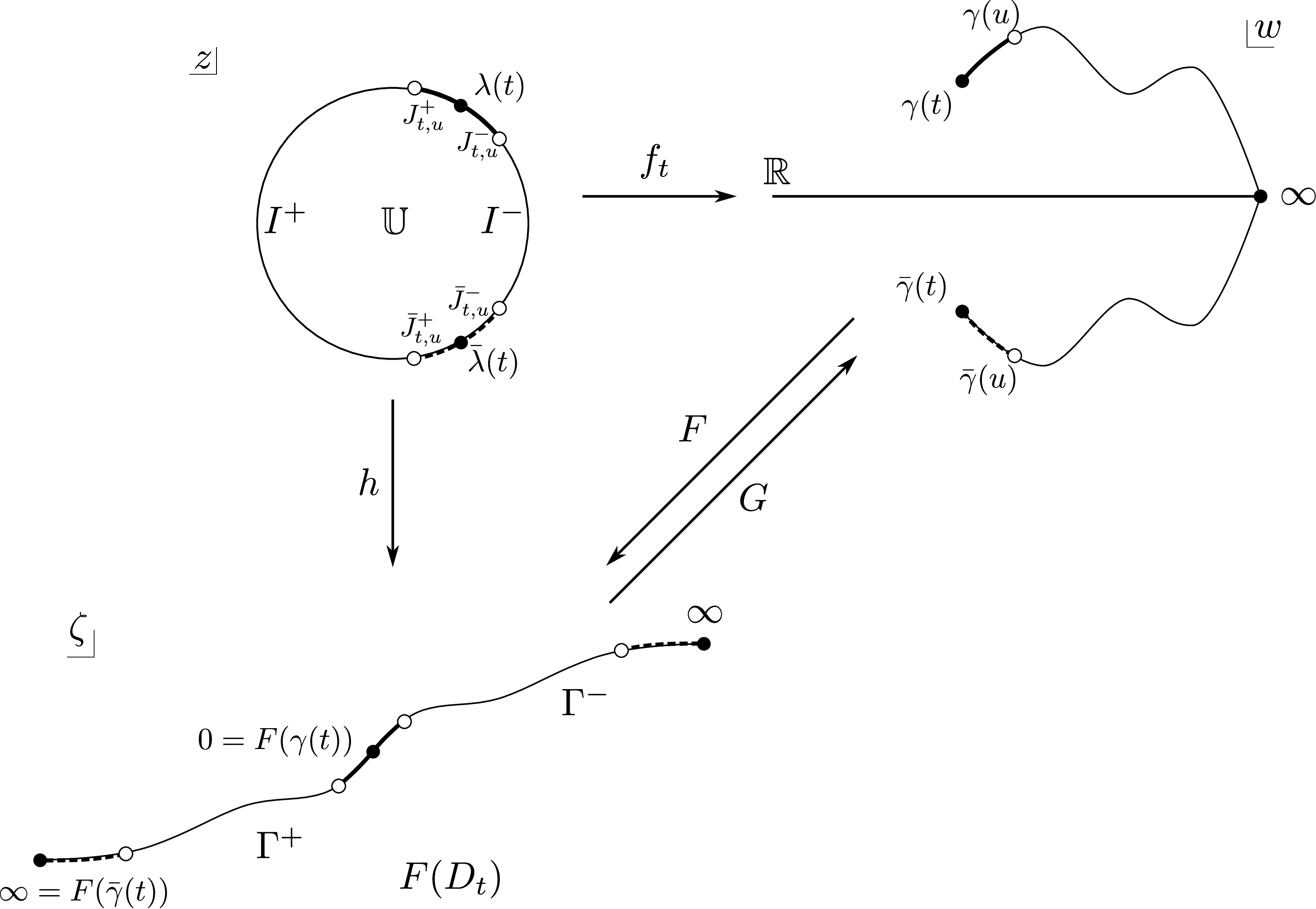
 \caption{$f_t$ gives maps from arcs $J^\pm_{t,u}$ and 
 $\bar J^\pm_{t,u}$ to $\gamma([t,u))$ and $\bar\gamma([t,u))$.}
\label{fig:Jtu}
\end{figure}

Similarly, the subset $\bar J_{t,u}:=\bar J_{t,u}^+\cup \bar J_{t,u}^-$
of $\der\bbD$, where $\bar J^\pm_{t,u}:=(f_t^\pm)^{-1}\bigl(\bar
\gamma([t,u))\bigr)$, is an open arc of $\der\bbD$ and, because of
\eqref{ft:reflection}, 
$
    \bar J_{t,u}^{\pm} = \overline{J_{t,u}^\pm}
$,
$
    \bar J_{t,u} = \overline{J_{t,u}}
$.

Then $f_t(J_{t,u}) = \gamma([t,u))$ and $f_t(\bar
J_{t,u})=\bar\gamma([t,u))$ and, if $0\leq t < u$,
\[
    I_t^\pm \setminus (J_{t,u}^\pm \cup \bar J_{t,u}^\pm)
    =
    f_t^{-1}\bigl(
     \gamma([u,\infty)) \cup \bar\gamma([u,\infty))
    \bigr)
    \cap
    I_t^\pm.
\]
Therefore, if we define $\omega_{t,u}$ on the boundary $\der\bbD$ as
\begin{equation}
    \omega_{t,u}(z) :=
    \begin{cases}
     f_u^{-1} \bigl(f_t(z)\bigr), 
     &(z\in \bbD \cup J_{t,u} \cup \bar J_{t,u}),
    \\
     (f_u^\pm)^{-1} \bigl(f_t(z)\bigr), 
     &(z\in I_t^\pm \setminus (J_{t,u} \cup \bar J_{t,u})),
    \end{cases}
\label{omegast:boundary}
\end{equation}
it is a continuous map $\omega_{t,u}:\bar\bbD\to\bar\bbD$. This is the
desired extension of the symmetric evolution family $\omega_{t,u}(z)$.

The images of subsets $\bbD$, $J_{t,u}$ and $\bar J_{t,u}$ of $\bar\bbD$
are
\[
    \omega_{t,u}(\bbD)=\bbD\setminus(\alpha_{t,u}\cup\bar\alpha_{t,u}),
    \quad 
    \omega_{t,u}(J_{t,u})= \alpha_{t,u},
    \quad 
    \omega_{t,u}(\bar J_{t,u})= \bar\alpha_{t,u},
\]
respectively, as shown in \figref{fig:omegatu(alphatu)}.

\begin{figure}[h]
 \centering
 \def\svgwidth{.8\linewidth}
 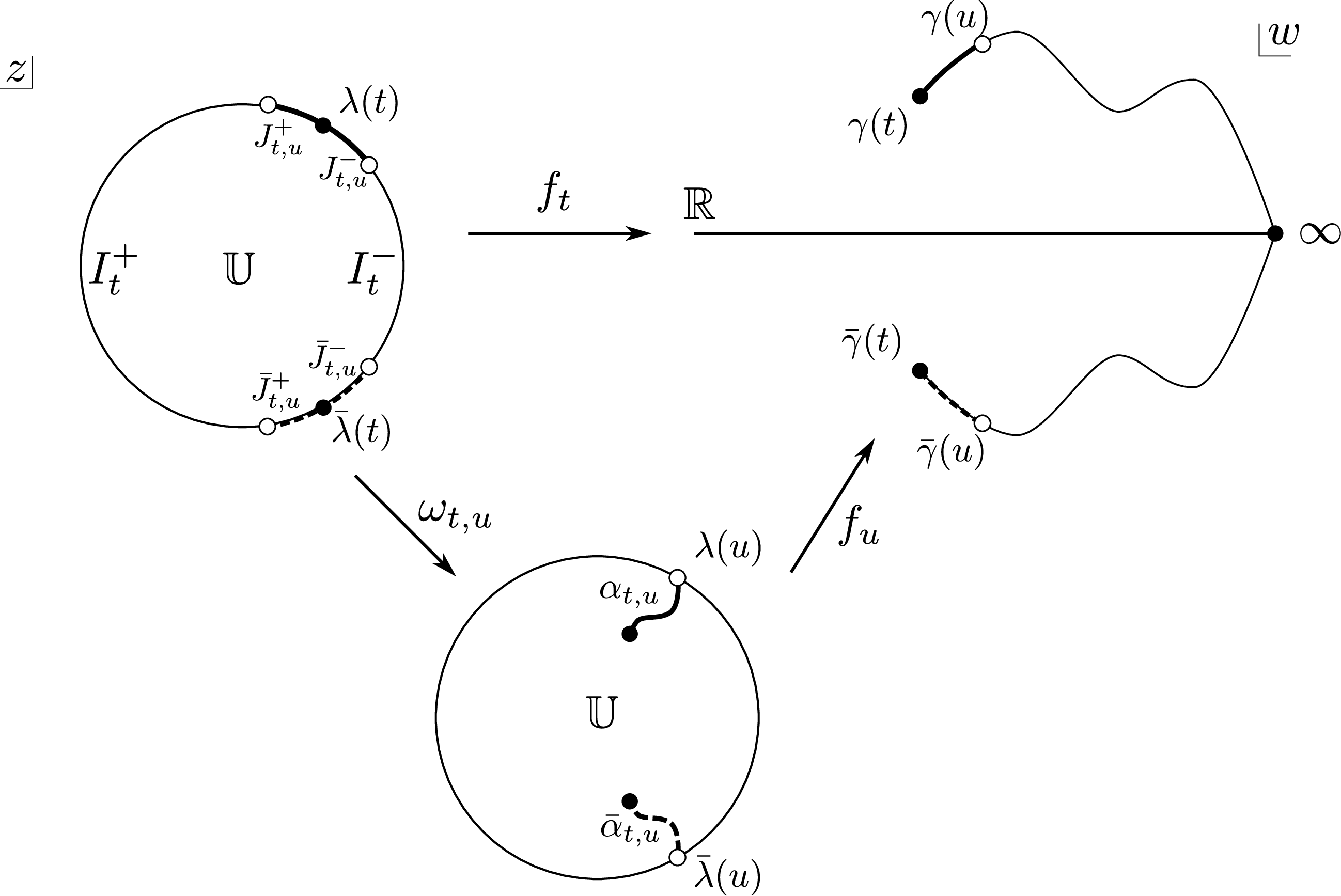
 \caption{$\omega_{t,u}$ maps arcs $J^\pm_{t,u}$ and
 $\bar J_{t,u}^\pm$ to arcs $\alpha_{t,u}$ and $\bar \alpha_{t,u}$.}
\label{fig:omegatu(alphatu)}
\end{figure}

\subsection{Symmetric radial L\"owner equation for $\{\omega_{s,t}(z)\}$.}
\label{subsec:sym-rad-loewner}

We add some details in derivation of the symmetric radial L\"owner
equation discussed in \secref{section:symmetric}. (Recall that in this
appendix $r(t)$ is normalized as $r(t)=e^t$ for simplicity.) We follow
the argument in \S3.2, Chapter 3 of \cite{kus-sug:19}.

As $\omega_{t,u}(z)=e^{t-u}z + O(z^2)$, we can take a branch of 
\begin{equation}
    h(z) = h(z;t,u):=
    \log\frac{\omega_{t,u}(z)}{z},
\label{def:h}
\end{equation}
such that $h(0)=t-u$. This function is holomorphic in $\bbD$ and
continuous on its closure $\bar\bbD$. Since
$
    \omega_{t,u}(\der\bbD\setminus(J_{t,u}\cup\bar J_{t,u}))
    \subset
    \der\bbD
$,
$\Re h(z) = 0$, if $z\in\der\bbD\setminus(J_{t,u}\cup\bar J_{t,u})$, and
since $\omega_{t,u}(J_{t,u}\cup\bar J_{t,u})\subset\bbD$, $\Re h(z)<0$,
if $z\in J_{t,u}\cup\bar J_{t,u}$.  Hence, applying the Schwarz integral
formula for the unit disk\footnote{Recall that formula \eqref{sl2} is
for functions outside of the unit disk.},
\begin{equation*}
    f(z) = 
    \frac{1}{2\pi}
    \int_{-\pi}^{\pi} 
     \Re f(e^{i\theta}) 
     \frac{e^{i\theta}+z}{e^{i\theta}-z}
    d\theta + i \Im f(0),\qquad (z\in\bbD),
\end{equation*}
to $h(z)$, we have
\begin{equation}
    h(z) 
    =
    \frac{1}{2\pi} \left(
     \int_{a}^{b} + \int_{-b}^{-a}
      \Re h(e^{i\theta}) 
      \frac{e^{i\theta}+z}{e^{i\theta}-z}
     d\theta
    \right),
\label{Phi=integral}
\end{equation}
where $a$ and $b$ are real numbers such that $e^{ia}$ and $e^{ib}$ are
endpoints of $J_{t,u}$ and, correspondingly, $e^{-ib}$ and $e^{-ia}$ are
endpoints of $\bar J_{t,u}$. Note that $h(0) = t-u \in \Real$ and
therefore $\Im h(0)=0$.

As in \secref{section:symmetric}, because of the symmetry $h(\bar
z)=\overline{h(z)}$ the above formula is rewritten as 
\begin{equation}
    h(z) 
    =
    \frac{1}{\pi}
    \int_{a}^{b}
     \Re h(e^{i\theta}) K(e^{i\theta},z)
    d\theta,
\label{Phi=integral:temp}
\end{equation}
where $K(\zeta,z)$ is the function defined by
$\displaystyle
    K(z,\zeta) 
    :=
    \frac{1}{2}
    \left(
     \frac{\zeta+z}{\zeta-z} + \frac{\bar\zeta+z}{\bar\zeta-z}
    \right)
$.
Substituting $\omega_{s,t}(z)$ for $z$ in \eqref{Phi=integral:temp} and
using $\omega_{t,u}(\omega_{s,t}(z)) = \omega_{s,u}(z)$, we have
\begin{equation}
    \log\frac{\omega_{s,u}(z)}{\omega_{s,t}(z)}
    =
    \frac{1}{\pi}
    \int_{a}^{b}
     \Re h(e^{i\theta}) K_R(e^{i\theta}, \omega_{s,t}(z))
    d\theta
    +
    \frac{i}{\pi}
    \int_{a}^{b}
     \Re h(e^{i\theta}) K_I(e^{i\theta}, \omega_{s,t}(z))
    d\theta,
\label{log(omega/omega)=integral:temp}
\end{equation}
where $K_R(\zeta,z):=\Re K(\zeta,z)$ and $K_I(\zeta,z):=\Im
K(\zeta,z)$. From the mean value theorem of integrals it follows that
there exist $\xi$ and $\eta$ in the interval $(a,b)$ such that
\begin{align*}
    \int_{a}^{b}
     \Re h(e^{i\theta}) K_R(e^{i\theta}, \omega_{s,t}(z))
    d\theta
    &=
    K_R(e^{i\xi},\omega_{s,t}(z))
    \int_{a}^{b} \Re h(e^{i\theta})d\theta,
\\
    \int_{a}^{b}
     \Re h(e^{i\theta}) K_I(e^{i\theta}, \omega_{s,t}(z))
    d\theta
    &=
    K_I(e^{i\eta},\omega_{s,t}(z))
    \int_{a}^{b} \Re h(e^{i\theta})d\theta.
\end{align*}
Hence by dividing \eqref{log(omega/omega)=integral:temp} by
$\displaystyle
   t-u = h(0)
   =
   \frac{1}{\pi}
    \int_{a}^{b}
     \Re h(e^{i\theta}) \,
    d\theta,
$
we obtain
\[
    \frac{\log \omega_{s,u}(z)- \log \omega_{s,t}(z)}{u-t}
    =
    -\bigl(K_R(e^{i\xi},\omega_{s,t}(z)) 
     + i\, K_I(e^{i\eta},\omega_{s,t}(z))\bigr). 
\]
When $u\searrow t$, $J_{t,u}$ contracts to $\lambda(t)$ (cf.\
\figref{fig:Jtu}) and therefore both of $e^{i\xi}$ and $e^{i\eta}$ tend
to $\lambda(t)$. Thus
\begin{equation}
    \frac{\der}{\der t} \log \omega_{s,t}(z)
    =
    - K(\lambda(t),\omega_{s,t}(z)),
\label{sym-rad-loewner:right/left}
\end{equation}
where the left hand side is regarded as the right derivative. Similarly
the limit $t\nearrow u$ gives \eqref{sym-rad-loewner:right/left} as the
left derivative, because $J_{t,u}$ contracts to $\lambda(u)$ in this
limit. Thus we have proved the symmetric radial L\"owner equation
\eqref{sym-rad-loewner} for $\omega(t)=\omega_{s,t}(z)$,
$e^{i\theta(t)}=\lambda(t)$.

\section*{Acknowledgements}

\addcontentsline{toc}{section}{\hspace{6mm}Acknowledgements}

We are grateful to Ikkei Hotta for his detailed comments on complex
analysis.
The results of T.T. (Sections 6,7 and Appendix) 
is an output of a research project implemented as part of
the Basic Research Program at the National Research University Higher
School of Economics.
The work of A.Z. (Sections 2--5) 
was performed at the Steklov Mathematical Institute of 
Russian Academy of Sciences, Moscow, and was supported by the 
Russian Science Foundation under grant 19-11-00062.

\end{document}